\newtheorem{theorem}{Theorem}
\newtheorem{lemma}[theorem]{Lemma}
\newtheorem{definition}[theorem]{Definition}
\newtheorem*{example*}{Example}
\providecommand{\diedge}[1]{(#1)}
\newcommand{\fail}{\textbf{Fail}}
\newcommand{\RR}{\ensuremath{\mathcal{R}}}
\newcommand{\ww}{\ensuremath{\omega}}
\NewDocumentCommand{\dds}{m o}{%
	d[#1]\IfNoValueF{#2}{_{#2}}%
}
\def\xth/{\textsuperscript{th}}
\newcommand{\StatexIndent}[1][3]{%
 	\setlength\@tempdima{\algorithmicindent}%
 	\Statex\hskip\dimexpr#1\@tempdima\relax}
\NewDocumentCommand{\q}{m m}{%
    q^{#1,#2}%
}
\NewDocumentCommand{\qt}{m m}{%
    t^{#1,#2}%
}
\NewDocumentCommand{\qx}{m m}{%
    x^{#1,#2}%
}
\NewDocumentCommand{\dg}{o m}{%
    \mathrm{dg}\IfNoValueF{#1}{_{#1}}(#2)%
}
    \let\Cref\crtCref
    \let\cref\crtcref
\begin{document}
\tagged{pods}{\fancyhead{}}

\begin{taggedblock}{arxiv,personal}
\title{Approximately Counting Subgraphs in Data Streams}

\author{Hendrik Fichtenberger}
\affil{University of Vienna\\ Austria\\ \href{mailto:hendrik.fichtenberger@univie.ac.at}{hendrik.fichtenberger@univie.ac.at}}

\author{Pan Peng}
\affil{University of Science and Technology of China\\ China\\ \href{mailto:ppeng@ustc.edu.cn}{ppeng@ustc.edu.cn}}	

\date{}
\end{taggedblock}

\begin{taggedblock}{pods}
\title{Approximately Counting Subgraphs in Data Streams}

\author{Hendrik Fichtenberger}
\email{hendrik.fichtenberger@univie.ac.at}
\orcid{0000-0003-3246-5323}
\affiliation{%
  \institution{University of Vienna}
  \city{Vienna}
  \country{Austria}
}

\author{Pan Peng}
\email{ppeng@ustc.edu.cn}
\orcid{0000-0003-2700-5699}
\affiliation{%
	\institution{School of Computer Science and Technology\\University of Science and Technology of China}
  \country{Hefei, China}
}

\end{taggedblock}

\newcommand{\theabstract}{
	\begin{abstract}
		Estimating the number of subgraphs in data streams is a fundamental problem that has received great attention in the past decade. In this paper, we give improved streaming algorithms for approximately counting the number of occurrences of an arbitrary subgraph $H$, denoted $\# H$, when the input graph $G$ is represented as a stream of $m$ edges. To obtain our algorithms, we provide a generic transformation that converts constant-round sublinear-time graph algorithms in the query access model to constant-pass sublinear-space graph streaming algorithms. Using this transformation, we obtain the following results.
		\begin{itemize}
			\item We give a $3$-pass turnstile streaming algorithm for $(1\pm \epsilon)$-approximating $\# H$ in $\tilde{O}(\frac{m^{\rho(H)}}{\epsilon^2\cdot \# H})$ space, where $\rho(H)$ is the fractional edge-cover of $H$. This improves upon and generalizes a result of McGregor et al. [PODS 2016], who gave a $3$-pass insertion-only streaming algorithm for $(1\pm \epsilon)$-approximating the number $\# T$ of triangles in $\tilde{O}(\frac{m^{3/2}}{\epsilon^2\cdot \# T})$ space if the algorithm is given additional oracle access to the degrees.
			\item We provide a constant-pass streaming algorithm for $(1\pm \epsilon)$-approximating $\# K_r$ in $\tilde{O}(\frac{m\lambda^{r-2}}{\epsilon^2\cdot \# K_r})$ space for any $r\geq 3$, in a graph $G$ with degeneracy $\lambda$, where $K_r$ is a clique on $r$ vertices. This resolves a conjecture by Bera and  Seshadhri [PODS 2020].
		\end{itemize}
		More generally, our reduction relates the adaptivity of a query algorithm to the pass complexity of a corresponding streaming algorithm, and it is applicable to all algorithms in standard sublinear-time graph query models, e.g., the (augmented) general model.
	\end{abstract}
}

\begin{taggedblock}{pods}
\theabstract
\begin{CCSXML}
	<ccs2012>
	<concept>
	<concept_id>10002950.10003624.10003633.10010917</concept_id>
	<concept_desc>Mathematics of computing~Graph algorithms</concept_desc>
	<concept_significance>500</concept_significance>
	</concept>
	<concept>
	<concept_id>10003752.10003809.10010055.10010057</concept_id>
	<concept_desc>Theory of computation~Sketching and sampling</concept_desc>
	<concept_significance>500</concept_significance>
	</concept>
	</ccs2012>
\end{CCSXML}

\ccsdesc[500]{Mathematics of computing~Graph algorithms}
\ccsdesc[500]{Theory of computation~Sketching and sampling}
\end{taggedblock}
\tagged{pods}{
\keywords{Data streams, Graph sampling, Triangle counting, Subgraph counting}
}

\maketitle
\tagged{arxiv,personal}{\theabstract}

\usetag{mainpart}

\section{Introduction}
Estimating the number of occurrences of a small target graph (e.g., a triangle or a clique) in a large graph is a fundamental problem that has received great attention in many domains, including database theory, network science, data mining and theoretical computer science. For example, in database theory, it is closely related to the subgraph enumeration problem and the join-size estimation problem (see, e.g., \cite{assadi2019simple}). In network science, it has applications in estimating the transitivity coefficient and clustering coefficient of a social network (e.g., \cite{pavan2013counting}), and motif detection in biological networks (e.g., \cite{grochow2007network}).  

In this paper, we study this problem in the streaming setting. That is, we are given an $n$-vertex graph $G$ with $m$ edges that is represented as a stream of edge updates, and a (small) target graph $H$ (e.g., a triangle or a clique). Our goal is to estimate the number of occurrences of $H$ in $G$ by using as small space as possible in a few number of passes over the stream. 
Throughout the paper, we focus on the \emph{arbitrary-order model}, i.e., the order of the elements in the stream is arbitrary and may be adversarial. The baseline of graph streaming algorithms is the \emph{insertion-only} setting (also known as \emph{cash-register} setting), where the edges of $G$ are given one by one as a stream. When explicitly stated, we also consider the \emph{turnstile} setting, where the stream consists of insertions and deletions (similar to the model of dynamic algorithms). In the latter, the graph $G$ results from applying these insertions and deletions to an initially empty graph on $n$ vertices in the order as they are read from the stream. Each model is relevant for different types of applications: Multi-pass insertion-only algorithms allow to process very large graphs that do not fit into memory as entries in adjacency lists can be seen as insertions. Multi-pass turnstile algorithms can be applied even if a stream of insertions and deletions cannot be consolidated into an insertion-only stream of the final graph, e.g., because the stream is split into multiple substreams that cannot be joined for privacy reasons.

As we will discuss below, the special cases of $H$ being a triangle, a cycle or a clique have been widely studied. However, to the best of our knowledge, the only previous streaming algorithms for $(1\pm \epsilon)$-approximating the number of copies of an \emph{arbitrary} subgraph $H$ are the following:
\begin{enumerate}
        \item Kane et al. \cite{kane2012counting} gave a 1-pass turnstile algorithm that (i) uses  $\tilde{O}(\frac{(m\cdot \Delta(G))^{|E(H)|}}{\epsilon^2\cdot (\# H)^2})$ space for any subgraph $H$, where $\# H$ is the number of occurrences of $H$ in $G$ and $\Delta(G)$ is the maximum degree of $G$, or (ii) uses $\tilde{O}(\frac{m^{|E(H)|}}{\epsilon^2\cdot (\# H)^2})$ space if the minimum degree of $H$ is at least $2$. 
    \item Bera and Chakrabarti \cite{bera2017towards} gave a 2-pass algorithm with space $\tilde{O}(m^{\beta(H)}/(\epsilon^2 \# H))$, where $\beta(H)$ is the \emph{integral edge cover} number\footnote{The integral edge cover of $H$, denoted $\beta(H)$, is the cardinality of its smallest edge cover, where an edge cover of $H$ is a set of edges that covers all its vertices.  It is known that for an $r$-clique $K_r$, $\beta(K_r)=\lceil\frac{r}{2}\rceil$, and for a length-$r$ cycle $C_r$, $\beta(C_r)=\lceil\frac{r}{2}\rceil$. } of $H$.
    \item Assadi et al. \cite{assadi2019simple} gave a $C$-pass streaming algorithm with space complexity $\tilde{O}(\frac{m^{\rho(H)}}{\epsilon^2\cdot \# H})$, where $\rho(H)$ is the \emph{fractional edge cover} of $H$ (see \cref{def:edgecover}) and $C$ is some constant depending on $H$. We remark that $C$ is not explicitly specified in \cite{assadi2019simple}, but as far as we can see, a straightforward transformation of their sublinear-time algorithm to the insertion-only streaming setting gives that $C \geq \rho(H) \in \Omega(|V(H)|)$.\footnote{In \cite{assadi2019simple}, a so-called sampler tree of depth at least $\rho(H)$ is built top-down by querying the graph. To obtain a sufficiently small bound on the space, level $i$ of the tree must be fully constructed before level $i+1$ can be constructed. It seems necessary that the algorithm makes at least one full pass to construct a single level.} Furthermore, it is known that $\rho(H)\leq \beta(H)\leq |E(H)|$ and that $\#H \leq m^{\rho(H)}$ \cite{atserias2008size}, and thus the space complexity of the algorithm in \cite{assadi2019simple} is always no worse than the ones in \cite{kane2012counting,bera2017towards}.
\end{enumerate} 

Furthermore, it is known that 1-pass turnstile algorithms need at least $\tilde{\Omega}(m / (\# H) ^{1/\tau})$ space, where $\tau$ is the fractional vertex-cover of $H$ (analogous to \cref{def:edgecover}), even for bounded-degree graphs \cite{kallaugher2018sketching}.

There exist many streaming algorithms for the special cases of $H$ being an $r$-clique $K_r$, or a length-$r$ cycle $C_r$, for any constant $r\geq 3$. The performance guarantees of these algorithms are parameterized by various parameters of $G$, e.g., $\# H$, the maximum degree, the maximum number of triangles which share a single vertex (or an edge),  etc. In the following, we will mainly discuss the state-of-the-art results that are most relevant to our setting, i.e., those that provide $(1\pm \epsilon)$-approximations with space complexity parameterized just by $\# H$ (and $n,m,\epsilon$) in the arbitrary-order model. 

\paragraph{Triangles. ($\rho(C_3) = 3/2$)} Approximating the number of occurrences of a triangle $T$ has been studied in a long line of work \cite{bar2002reductions,ahn2012graph,bera2017towards,braverman2013hard,bulteau2016triangle,buriol2006counting,cormode2014second,jowhari2005new,mcgregor2016better,pavan2013counting,kolountzakis2012efficient,tsourakakis2009doulion,pagh2012colorful,jayaram2021optimal,kallaugher2017hybrid,manjunath2011approximate}. In one pass, Manjunath et al. \cite{manjunath2011approximate} gave one algorithm achieving  $\tilde{O}(\frac{m^{3}}{\epsilon^2\cdot (\#T)^2})$ space (in the turnstile model), which is nearly optimal as any 1-pass algorithm for this problem requires $\Omega(\frac{m^{3}}{(\#T)^2})$ space \cite{bulteau2016triangle}. In two passes,  McGregor, Vorotnikova and Vu \cite{mcgregor2016better} gave one algorithm using $\tilde{O}(\frac{m}{\epsilon^2\cdot \sqrt{\# T}})$ space (see also \cite{cormode2017second}). This is in contrast with a lower bound $\Omega(\min\{\frac{m}{\sqrt{\# T}}, \frac{m^{3/2}}{\# T}\})$ for any multi-pass algorithm by Bera and Chakrabarti \cite{bera2017towards}. In three passes, McGregor, Vorotnikova and Vu \cite{mcgregor2016better} gave one algorithm using $\tilde{O}(\frac{m^{3/2}}{\epsilon^2\cdot \# T})$ space, while \emph{their algorithm is assumed to have oracle access to vertex degrees}. In four passes, Bera and Chakrabarti \cite{bera2017towards} gave an algorithm using $\tilde{O}(\frac{m^{3/2}}{\epsilon^2\cdot \# T})$ space.

\paragraph{Cycles. ($\rho(C_r) = r/2$)} The case of counting a length-$r$ cycle $C_r$ (for some constant $r\geq 4$) has been studied in \cite{manjunath2011approximate,bera2017towards,mcgregor2020triangle}. In one pass, the turnstile algorithm in \cite{manjunath2011approximate} achieves $\tilde{O}(\frac{m^{r}}{\epsilon^2\cdot (\# C_r)^2})$ space, which is in contrast to a 1-pass space lower bound $\Omega(m^{r/2}/(\# C_r)^2)$ for even $r$ and $\Omega(m^r/(\# C_r)^2)$ for odd $r$ \cite{bera2017towards}. There exists an algorithm with space complexity $\tilde{O}(\frac{m^{r/2}}{\varepsilon^2\cdot\# C_r})$ using two passes for even $r$ and four passes for odd $r$ \cite{bera2017towards}.  %
In contrast, any multi-pass streaming algorithm requires $\Omega(m^{r/2}/\# C_r)$ space for even $r$ and $\Omega(\min\{\frac{m^{r/2}}{\# C_r},\frac{m}{(\# C_r)^{1/(r-1)}}\})$ space for odd $r$  \cite{bera2017towards}. In three passes, there exists an algorithm for $C_4$ using $\tilde{O}(\frac{m}{\epsilon^2\cdot (\# C_4)^{1/4}})$ space \cite{mcgregor2020triangle}. 

\paragraph{Cliques. ($\rho(K_r) = r/2$)} The case of counting an $r$-clique $K_r$ (for some constant $r\geq 4$) has been studied in \cite{pavan2013counting,bera2017towards}. In one pass, it is necessary to use $\Omega(\frac{m^r}{(\# K_r)^2})$ space. There exists one algorithm with $\tilde{O}(\frac{m^{r/2}}{\# K_r})$ space that uses two passes for even $r$ and four passes for odd $r$. In contrast, any multi-pass streaming algorithm requires $\Omega(\min\{\frac{m^{r/2}}{\# K_r}, \frac{m}{(\# K_r)^{1/(r-1)}}\})$ space \cite{bera2017towards}.

Finally, we mention that recently Bera and Seshadhri \cite{BerHow20a} motivated the study of streaming algorithms for subgraph counting in low \emph{degeneracy} graphs (see Definition \ref{def:degeneracy}), which is a natural class of graphs arising in practice. In addition, the class of constant degeneracy graphs includes all planar graphs, minor-closed families of graphs and preferential attachment graphs. For a graph with degeneracy at most $\lambda$, they gave a $6$-pass algorithm with space complexity $\frac{m\lambda}{\#T}\cdot \mathrm{poly}(\log n,\epsilon^{-1})$ for $(1\pm \epsilon)$-approximating the number of triangles in $G$, which breaks the worst-case lower bound for general graphs. It was conjectured that there exists a constant pass streaming algorithm for any clique with space complexity $\tilde{O}(m\lambda^{r-2}/\# K_r)$ in a graph with degeneracy at most $\lambda$ \cite{BerHow20a}.

\subsection{Our results}
Let $n$ and $m$ be the number of vertices and edges in the input graph $G$, respectively. Let $\# H$ be the number of subgraphs $H$ in $G$. Sometimes, we use $K_r$ and $T$ to denote the subgraphs $r$-clique (i.e., a clique on $r$ vertices) and triangle, respectively. Though we do not know $\# H$ in advance, we adopt the common convention from literature to parameterize our algorithms in terms of $\# H$. Since $\# H$ is unknown, $\# H$ can be replaced by a lower bound $L$ on $\# H$ to obtain corresponding guarantees for our algorithms. Alternatively, one can phrase the problem as distinguishing if the number of subgraphs $H$ is at most $L$ or at least $(1+\epsilon)L$ for an input parameter~$L$.

We first present the following algorithm.

\begin{theorem}
	\label{lem:ts-counting}
	Let $\epsilon > 0$ and let $H$ be an arbitrary subgraph of constant size. There exists a $3$-pass turnstile streaming algorithm for computing a $(1+\epsilon)$-approximation of the number of copies of $H$ in the input graph $G$ with high probability\footnote{In this paper, `with high probability' refers to `with probability at least $1-n^{-C}$, for some constant $C>0$'. } that has space complexity $\tilde{O}(m^{\rho(H)} / (\epsilon^2 \# H))$.
\end{theorem}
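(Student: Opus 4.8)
The plan is to build on the generic transformation promised in the abstract: take a constant-round sublinear-time query algorithm for estimating $\#H$ and convert it into a constant-pass turnstile streaming algorithm, with the number of passes controlled by the adaptivity (round complexity) of the query algorithm. Concretely, I would start from the sublinear-time subgraph-counting algorithm of Assadi, Kapralov, and Khanna (the one referenced in the excerpt as \cite{assadi2019simple}), whose query complexity gives an estimator for $\#H$ using roughly $\tilde O(m^{\rho(H)}/\#H)$ samples, where $\rho(H)$ is the fractional edge cover. The key observation is that, although their ``sampler tree'' is built over $\Omega(|V(H)|)$ levels when implemented naively, the \emph{essential} adaptivity is small: one needs (i) access to (approximate) degrees, (ii) the ability to sample a uniformly random edge, and (iii) the ability to, given a sampled vertex or edge, query its neighborhood / incident edges. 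Each of these primitives can be realized in $O(1)$ passes over a turnstile stream using standard $\ell_0$-sampling and $\ell_p$-sampling sketches, degree counters, and HyperLogLog-type estimators, all of which use only $\mathrm{polylog}(n)$ space per instance.

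The main steps, in order, are as follows. First, in pass 1, I would run in parallel $\tilde O(m^{\rho(H)}/(\epsilon^2\#H))$ independent $\ell_0$-samplers (or a single sketch supporting that many samples) to obtain that many independent uniformly random edges of $G$, together with, via simultaneous degree-counting sketches, (approximate) degrees of the endpoints of all sampled edges — here turnstile support is exactly what $\ell_0$-sampling sketches provide. Second, in pass 2, using the random edges and degrees obtained in pass 1 as the ``seeds,'' I would simulate the core of the query algorithm: for each seed I need to probe a bounded number of additional edges/neighbors determined non-adaptively (or with $O(1)$ levels of adaptivity) by the structure of $H$ and by a fixed ordering of vertices; these probes are again answered by a fresh batch of sampling sketches set up before pass 2 begins (e.g., sketches that, conditioned on a vertex $v$ revealed after pass 1, allow sampling a random incident edge of $v$, which can be prepared as a family of per-vertex $\ell_0$-samplers or, more economically, via a single edge sketch keyed by endpoint). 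Third, in pass 3, I would resolve the final level of adaptive queries (the ``is this edge present'' / ``close the copy of $H$'' checks) — in the turnstile model, checking membership of a specific edge is trivial once we have decided which edge to check, but deciding which edges to check depends on all earlier samples, which is why a third pass suffices and is needed. Finally, I would aggregate: the estimator is an average of the per-sample indicator statistics, rescaled by the appropriate degree-dependent weights; a Chebyshev/Chernoff argument (as in \cite{assadi2019simple}) shows that $\tilde O(m^{\rho(H)}/(\epsilon^2\#H))$ samples give a $(1\pm\epsilon)$-approximation with high probability, and a union bound over the $\mathrm{polylog}$-space sketches (each correct w.h.p.) controls the failure probability. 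Since $H$ has constant size, all the combinatorial overheads (number of vertex orderings, number of probes per seed, the constant in the fractional-cover bound $\#H\le m^{\rho(H)}$) are absorbed into the $\tilde O(\cdot)$.

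The part I expect to be the main obstacle is making the ``$3$ passes'' claim tight rather than merely ``$O(1)$ passes.'' The issue is that the Assadi--Kapralov--Khanna algorithm, as written, has adaptivity proportional to $\rho(H)$ (each level of the sampler tree is conditioned on the level above), and a direct simulation would cost $\rho(H)$ passes. To get down to exactly $3$, I would need to argue that the tree can be flattened: instead of building it level by level, sample an entire root-to-leaf ``path template'' of edges in a single shot using one combined sketch, so that the only genuine sequential dependencies are (a) learn degrees (pass 1), (b) learn the sampled substructure and decide which closing edges to test (pass 2), (c) test them (pass 3). Verifying that this flattening preserves the variance bound — i.e., that the importance-sampling weights, which in the original algorithm depend on quantities revealed at intermediate levels, can be reconstructed from the degrees learned in pass 1 plus the full path learned in pass 2 — is the delicate point, and it is where I would spend most of the technical effort. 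A secondary, more routine concern is handling the turnstile setting: one must ensure every primitive (edge sampling, degree estimation, neighborhood sampling) is implemented by a \emph{linear} sketch so that deletions are supported; this rules out some simpler reservoir-sampling approaches but is handled by known $\ell_0$/$\ell_1$-sampler constructions, at the cost of the extra $\mathrm{polylog}(n)$ factors hidden in $\tilde O$.
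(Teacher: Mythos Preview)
Your high-level architecture---simulate a constant-round query algorithm pass-by-pass, implement the primitives with $\ell_0$-samplers so that turnstile updates are supported, and finish with a Chernoff-based biased-coin estimate---is exactly the framework the paper uses. The gap is in the choice of query algorithm and, correspondingly, in the step you yourself flag as the obstacle.

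The paper does \emph{not} flatten the Assadi--Kapralov--Khanna sampler tree. Instead it uses the subgraph sampler of Fichtenberger, Gao and Peng \cite{fichtenberger2020sampling}, which is already $3$-round adaptive by construction. The reason is structural: any $H$ decomposes (\cref{lem:subgraph-decomposition}) into vertex-disjoint odd cycles $\overline{C_1},\ldots,\overline{C_\alpha}$ and stars $\overline{S_1},\ldots,\overline{S_\beta}$ with $\rho(H)=\sum_i\rho(\overline{C_i})+\sum_j\rho(\overline{S_j})$. A canonical $k$-star is hit with probability $1/(2m)^k$ by sampling $k$ random edges and checking they share a center; a canonical $(2k{+}1)$-cycle is hit with probability $1/(2m)^{k+1/2}$ by sampling $k$ random edges for the alternating edges $(u_1,u_2),(u_3,u_4),\ldots,(u_{2k-1},u_{2k})$ and then obtaining the single missing vertex $u_{2k+1}$ either as a random neighbor of $u_1$ (when $\dg{u_1}\le\sqrt{2m}$) or as an endpoint of one extra random edge (when $\dg{u_1}>\sqrt{2m}$). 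Crucially, every odd cycle requires exactly \emph{one} neighbor query, not a chain of them. Hence the three rounds are: (1) sample all the random edges and count $m$; (2) for each odd-cycle component, issue the one neighbor query; (3) perform all pair/degree queries to verify canonicity and that the pieces assemble into $H$. Multiplying the success probabilities over components gives that each fixed copy of $H$ is returned with probability $1/(2m)^{\rho(H)}$ (or $(1\pm\varepsilon)/(2m)^{\rho(H)}$ in the relaxed model used for turnstile streams).

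So the ``flattening'' you propose is unnecessary, and your proposal does not actually carry it out; the missing idea is precisely this odd-cycle/star decomposition and the $\sqrt{2m}$-threshold trick that reduces each cycle's adaptivity to a single neighbor query. Once you swap in the FGP sampler, the rest of your plan (parallel $\ell_0$-samplers for $f_1$ and $f_3$, counters for $f_2$, indicators for $f_4$, then average $\tilde O(m^{\rho(H)}/(\varepsilon^2\#H))$ independent trials) goes through verbatim and matches the paper's proof.
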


Note that the space complexity of our turnstile algorithm matches the insertion-only algorithm in \cite{assadi2019simple} for approximating $\# H$, while our algorithm uses only three instead of $\Omega(|V(H)|)$ passes, even in the turnstile setting (see discussion above). Furthermore, for the special case of triangles, the space complexity of our turnstile algorithm also matches the state-of-the-art of insertion-only algorithms, which is $\tilde{O}(m^{3/2} / \# T)$ \cite{mcgregor2016better,bera2017towards}, while these algorithms either use three passes together with the assumption that the algorithm is given oracle access to the degrees \cite{mcgregor2016better}, or use four passes \cite{bera2017towards}.

Our second result is a constant-pass algorithm for approximating $\# K_r$ in low degeneracy graphs, which resolves a conjecture by Bera and Seshadhri \cite{BerHow20a}. This algorithm also generalizes the algorithm in \cite{BerHow20a} that only considers $r=3$ (i.e., the triangle case).

\begin{theorem}\label{thm:clique} %
	For any $\epsilon > 0, r \geq 3$, there exists a $5r$-pass insertion-only streaming algorithm for computing an $(1+\epsilon)$-approximation to $\# K_r$ in graphs with degeneracy $\lambda$ that has space complexity
	\begin{equation*}
		\frac{m\lambda^{r-2}}{\# K_r}  \cdot \mathrm{poly}(\log n, \epsilon^{-1}, r^r)
	\end{equation*}
	and succeeds with high probability.
\end{theorem}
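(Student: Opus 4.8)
The plan is to obtain the theorem by instantiating our generic query-to-streaming reduction with a sublinear-time algorithm for approximately counting $r$-cliques in bounded-degeneracy graphs; then the only clique-specific work is to identify a suitable query algorithm and to bound its adaptivity by $O(r)$. The natural candidate is the sublinear-time estimator of Eden, Ron and Seshadhri for counting $r$-cliques in graphs of bounded arboricity---equivalently bounded degeneracy, since the two parameters agree up to a factor of $2$---which, in the augmented general model, $(1\pm\epsilon)$-approximates $\# K_r$ with high probability using $\frac{m\lambda^{r-2}}{\# K_r}\cdot\mathrm{poly}(\log n,\epsilon^{-1},r^r)$ queries and time. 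This is exactly the target space bound up to logarithmic factors; we may further assume it is $o(m)$, since otherwise the streaming algorithm can simply store the whole graph in a single pass.

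First, I would restate the structure of that estimator so that its round complexity becomes explicit. It draws $\frac{m\lambda^{r-2}}{\# K_r}\cdot\mathrm{poly}(\log n,\epsilon^{-1},r^r)$ independent uniform edge samples up front, so the randomness costs no adaptive round; then, for each sampled edge $\{u,v\}$, it (a) reads the degrees of $u$ and $v$ to orient the edge in the degree-based ordering that proxies a degeneracy ordering, and (b) tries to grow the edge, one vertex at a time, into an $r$-clique that is ``assigned'' to $\{u,v\}$ by that ordering, counting the successful extensions. Each growth step---querying the relevant neighbor lists and random neighbors, intersecting them with the current set of common neighbors, and checking the orientation condition with a few degree queries---uses only a constant number of rounds of queries, and the queries issued in a round depend only on the transcript so far, \emph{not} on how many partial cliques are currently alive. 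Hence the whole estimator runs in $O(r)$ adaptive rounds; a careful accounting of the degree-reading, neighbor-listing, intersection, and re-orientation substeps across the $r-2$ growth levels gives the bound $5r$.

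Second, I would invoke the reduction: an $R$-round query algorithm in the augmented general model with query complexity $Q$ yields an $O(R)$-pass insertion-only streaming algorithm using $Q\cdot\mathrm{poly}(\log n)$ space. Concretely, the up-front uniform edge samples are produced by reservoir sampling in one pass; for each subsequent adaptive round, the batch of pending degree, neighbor, random-neighbor and pair queries is determined by the earlier rounds, so all $\tilde{O}(Q)$ of them can be stored and then answered together in one pass using counters and linear scans over the stream. Tracking the constants through this simulation gives $5r$ passes, and plugging in $Q=\frac{m\lambda^{r-2}}{\# K_r}\cdot\mathrm{poly}(\log n,\epsilon^{-1},r^r)$, after absorbing the extra logarithmic factor, yields the claimed space bound.

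The main obstacle is the adaptivity analysis of the clique estimator. As originally phrased it is recursive and its query dependencies are not organized into a bounded number of rounds, so a naive simulation would spend one pass per unit of exploration depth, which can be $\mathrm{poly}(\log n)$ or worse rather than $O(r)$. The crux is therefore to reorganize the extension procedure so that passing from the collection of live partial cliques of size $i$ to those of size $i+1$ uses only a constant number of rounds whose query sets are succinct and independent of the (possibly large) number of live partial cliques, while preserving the overall $\tilde{O}(m\lambda^{r-2}/\# K_r)$ query budget; this is exactly what turns the exploration depth $\Theta(r)$ into a pass complexity of $O(r)$. A secondary point, already taken care of by the reduction, is that each pass must answer a heterogeneous batch of $\tilde{O}(Q)$ queries within $\tilde{O}(Q)$ space, which requires each round's query set to be efficiently encodable.
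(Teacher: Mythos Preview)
Your overall strategy matches the paper's: take the Eden--Ron--Seshadhri (ERS) clique estimator, run it in the augmented general model starting directly from uniform random edges, bound its round-adaptivity by $O(r)$, and then apply the generic query-to-streaming reduction. The space bound and the correctness are inherited verbatim from ERS, so the only real content is the $5r$-round adaptivity claim---exactly as you say.

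However, your description of the ERS algorithm is incomplete in a way that undermines the round-counting argument. You portray the estimator as: sample edges, then iteratively grow each edge to an $r$-clique by adding one vertex per step while checking an orientation condition. That \emph{is} one half of the algorithm---the construction of the sampled multisets $\RR_2,\RR_3,\dots,\RR_r$---and your analysis correctly gives $O(r)$ rounds for it (two per growth step). But the ERS estimator has a second, heavier component you do not account for: for each sampled ordered $r$-clique $\vec{C}\in\RR_r$, one must decide whether $\vec{C}$ is ``assigned'' its unordered version. This is \emph{not} determined by a degree ordering or a few degree queries; it requires, for every ordered $r$-clique $\vec{C}'$ isomorphic to $C$ and every prefix $\vec{C}'_{\leq t}$ with $t\in\{2,\dots,r\}$, running an \textsc{IsActive} test that itself performs another full iterative clique-growing process from level $t$ up to level $r$. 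Taken na\"ively, these nested calls look like $\Theta(r)$ sequential \textsc{IsActive} invocations, each of cost $\Theta(r)$ rounds, giving $\Theta(r^2)$ rather than $O(r)$.

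The paper's key observation---which your proposal does not supply---is that all of the \textsc{IsActive} calls (across all $r!$ orderings, all prefix lengths $t$, and all sampled $\vec{C}\in\RR_r$) are mutually independent: the parameters of each call are fixed once $\RR_r$ is known, so they can be executed in parallel, sharing passes. The most expensive single call starts at $t=2$ and uses at most $2r$ rounds, so the entire assignment-checking block costs $O(r)$ rounds rather than $O(r^2)$. Combining this with the roughly $2r$ rounds for building $\RR_2,\dots,\RR_r$ and a few setup passes yields the stated $5r$. Your ``main obstacle'' paragraph gestures at reorganization, but the specific structure that needs to be parallelized is the \textsc{IsAssigned}/\textsc{IsActive} hierarchy, not the single chain of growth steps you describe.
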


Both of our two main results are obtained by a generic transformation between streaming algorithms and sublinear-time algorithms in the query access model (see \cref{def:aug-gen-graph-model}). More precisely, we relate the \emph{adaptivity} of any query algorithm to the \emph{pass complexity} of a corresponding streaming algorithm that is obtained by the transformation.

\subsection{Our techniques}

The adaptivity of sublinear query algorithms, is usually classified into \emph{non-adaptive} and \emph{adaptive} algorithms. Non-adaptive algorithms must specify all queries on their input in advance, and adaptive algorithms may ask arbitrary queries during their computation (in particular, a query might depend on the previous query answers). Inspired by a notion by Cannone and Gur~\cite{CanAda18}, we define the \emph{round-adaptivity} of a sublinear-time graph query algorithm, which formalizes ``the number of levels of dependencies'' needed in an adaptive algorithm. Intuitively, each level corresponds to a set of queries that only depends on the queries in the previous levels and not on the queries on the same level (or later levels).  
Then we argue that exploiting this round-adaptivity leads to a fruitful connection between query algorithms and streaming algorithms. In particular, we show that if an algorithm is allowed to ask a batch of non-adaptive queries not just once, but for $k > 1$ rounds, this translates very naturally into a $k$-pass streaming algorithm. To illustrate our transformation, we show that known sublinear-time algorithms for sampling and counting subgraphs \cite{fichtenberger2020sampling,eden2020faster} lead to novel streaming algorithms that advance the state of the art.

\subsection{Other related work} There has been a line of works for approximately counting subgraphs in other graph stream models, including the random order model (in which the stream consists of a random permutation of the edges) \cite{mcgregor2016better,mcgregor2020triangle}, as well as in the adjacency list model (in which each edge appears twice and the edges in the stream are grouped by their endpoints) \cite{mcgregor2016better,kallaugher2019complexity}.

\section{Preliminaries}

\paragraph{Graphs.}
We consider undirected graphs. For the input graph $G=(V,E)$ of an algorithm, we use $n := \lvert V \rvert$ and $m := \lvert E \rvert$, and we denote the number of subgraphs $H$ in $G$ by $\# H$. The degree of a vertex $v$ is denoted $\dg{v}$. Our algorithms' space complexities are parameterized by the following concepts.

\begin{definition}[Fractional Edge-Cover Number]\label{def:edgecover}
    A fractional edge-cover of $H = (V_H,E_H)$ is a mapping $\psi: E_H \rightarrow [0,1]$ such that for each vertex $v\in V_H$, $\sum_{e\in E_H, v\in e} \psi(e)\geq 1$. The fractional edge-cover number $\rho(H)$ of $H$ is the minimum value of $\sum_{e\in E_H}\psi(e)$ among all fractional edge-covers $\psi$.
\end{definition}

Let $C_k$ denote the cycle of length $k$. Let $S_k$ denote a star with $k$ petals, i.e., $S_k = (\{u, v_1, \ldots, v_k\}, \cup_{i \in [k]} \{u, v_k\})$. Let $K_k$ denote a clique on $k$ vertices. It is known that $\rho(C_{2k+1})=k+1/2$, $\rho(S_k)=k$ and $\rho(K_k)=k/2$. The following result is known \cite{NPRR18,assadi2019simple}, see also \cite[Theorem 30.10]{schrijver2003combinatorial}.

\begin{lemma}
	\label{lem:subgraph-decomposition}
	For any subgraph $H$, there exist $\alpha, \beta \geq 0$ so that $H$ can be decomposed into a collection of vertex-disjoint odd cycles $\overline{C_1},\ldots,\overline{C_\alpha}$ and star graphs $\overline{S_1},\ldots,\overline{S_\beta}$ such that 
	$$\rho(H)=\sum_{i=1}^\alpha\rho(\overline{C_i})+\sum_{j=1}^\beta\rho(\overline{S_j}).$$
\end{lemma}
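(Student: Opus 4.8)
The plan is to read the decomposition directly off an optimal fractional edge cover of $H$, exploiting the half-integrality of the fractional-edge-cover polytope. First I would dispose of the trivial case: if $H$ has an isolated vertex it has no fractional edge cover and the claim is vacuous, so assume $H$ has minimum degree at least $1$. Let $P := \{\psi \in [0,1]^{E_H} : \sum_{e \ni v}\psi(e) \ge 1 \text{ for all } v \in V_H\}$ and let $\psi^\star$ be an optimal vertex of $P$, so $\sum_e \psi^\star(e) = \rho(H)$. The first ingredient is the classical fact (this is essentially \cite[Theorem~30.10]{schrijver2003combinatorial}; it also follows from a routine perturbation argument, shifting weight by $\pm\delta$ alternately along an even cycle, or along a path joining two vertices whose cover constraints are slack, to contradict extremality) that $\psi^\star$ is half-integral, i.e. $\psi^\star(e) \in \{0, \tfrac12, 1\}$ for every edge $e$.

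Next I would analyze the support. Write $H_{1/2}$ for the subgraph of edges with $\psi^\star(e) = \tfrac12$ and $E_1$ for the set of edges with $\psi^\star(e) = 1$. Using that $\psi^\star$ is an extreme point (equivalently, applying a secondary minimization over optimal half-integral solutions, e.g. of the number of $\tfrac12$-edges), I would argue that $H_{1/2}$ is $2$-regular and each of its components is an odd cycle: an even cycle of $\tfrac12$-edges may be re-weighted with the alternating $\{0,1\}$ pattern at no change in total cost (eliminating $\tfrac12$-edges), while a vertex of $H_{1/2}$-degree $\ge 3$, or a $\tfrac12$-edge one of whose endpoints is slack, would let one strictly decrease the cost; so neither occurs. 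Call the resulting vertex-disjoint odd cycles $\overline{C_1},\dots,\overline{C_\alpha}$, let $U$ be their union and $V' := V_H \setminus U$. Since each vertex of $U$ is already covered with total weight exactly $1$ by its cycle, I would then observe that in an optimal $\psi^\star$ no edge of $E_1$ is incident to $U$: if $e=vw$ with $v\in U$ were such an edge, one could either delete $e$ (if $w$ is still covered) or, as a short computation on the ``odd cycle with one pendant vertex'' gadget shows, re-cover the cycle of $v$ together with $w$ at strictly smaller total cost, contradicting optimality. Hence all edges of $E_1$ lie inside $V'$, and since the vertices of $V'$ are covered by no $\tfrac12$-edge, $E_1$ is an (integral) edge cover of $V'$; optimality forces it to have minimum cardinality among all edge covers of $V'$.

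The last structural input is the classical fact that a minimum — hence edge-minimal — edge cover is a disjoint union of stars: an edge-minimal cover contains no cycle (an edge of a cycle is redundant) and no path on four vertices (its middle edge is redundant), so each connected component is a tree of diameter at most two, i.e. a star. This yields vertex-disjoint stars $\overline{S_1},\dots,\overline{S_\beta}$ partitioning $V'$, which together with $\overline{C_1},\dots,\overline{C_\alpha}$ partition $V_H$. Since $\rho$ of an odd cycle equals its number of edges divided by $2$, and $\rho$ of a star equals its number of edges, I would finish by just adding up:
$$\rho(H) = \sum_{e}\psi^\star(e) = \tfrac12\lvert E(H_{1/2})\rvert + \lvert E_1\rvert = \sum_{i=1}^{\alpha}\rho(\overline{C_i}) + \sum_{j=1}^{\beta}\rho(\overline{S_j}).$$

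I expect the main obstacle to be the structural clean-up in the middle paragraph: passing from ``some optimal $\psi^\star$'' to the canonical form in which the $\tfrac12$-edges form vertex-disjoint odd cycles with no weight-$1$ edge attached. Half-integrality is standard (or citable), and once the canonical form is available, the remainder — minimum edge covers are star forests, and the final arithmetic with $\rho$ — is routine. The delicate point is that a local weight-shifting move can violate the cover constraint at a tight neighbour, so one must carefully combine a strict-decrease argument with an appropriate secondary tie-break (or argue directly from the extreme-point characterization via linear independence of incidence vectors) to rule out every non-canonical configuration.
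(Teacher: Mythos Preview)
The paper does not actually prove this lemma: it is stated as a known result and attributed to \cite{NPRR18,assadi2019simple} with a pointer to \cite[Theorem~30.10]{schrijver2003combinatorial}. So there is no ``paper's own proof'' to compare against --- the authors simply import the decomposition as a black box.

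Your proposal is a correct expansion of the standard argument behind those citations. The route via half-integrality of the extreme points of the fractional-edge-cover polytope, followed by the structural clean-up (the $\tfrac12$-edges form vertex-disjoint odd cycles, the weight-$1$ edges form a minimum integral edge cover of the remaining vertices, and a minimum edge cover decomposes into stars), is exactly how this is typically proved and is what one finds, in various levels of detail, in the references the paper cites. Your own assessment of where the friction lies is accurate: the step ruling out weight-$1$ edges incident to the odd-cycle part requires either the gadget computation you sketch or a direct appeal to the extreme-point characterization, and both work. In short, your write-up is a faithful unpacking of the cited result rather than an alternative approach.
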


\begin{definition}[degeneracy]\label{def:degeneracy}
    The degeneracy of a graph is the smallest $\kappa \geq 0$ so that every subgraph has maximum vertex degree $\kappa$.
\end{definition}

\paragraph{Graph query algorithms.}
In the augmented general graph model, an algorithm gets the input size $n$ and query access to an input graph $G=(V,E)$, where $V=[n]$, and it may ask for random edges, query degrees and neighbors of vertices, and check the existence of edges in $E$. Formally, it is defined as follows.

\begin{definition}
	\label{def:aug-gen-graph-model}
    The \emph{augmented general graph model} is defined for the set of all graphs. For a graph $G=(V,E)$, where $V = [n]$, it allows four types of queries: ($f_1$) return a uniformly random edge $e \in E$; ($f_2$) given $v \in V$, return the degree of $v$; ($f_3$) given $v \in V$ and $i \in [\dg{v}]$, return the $i$\xth/ neighbor of $v$; ($f_4$) given $u,v \in V$, return whether $(u,v) \in E$. 
    
    The \emph{query complexity} of a graph query algorithm is the total number of queries it asks on its input; and its \emph{space complexity} is the maximal amount of space it uses during its execution (including space to store query answers, but excluding the space used to store the whole input).
\end{definition}

The \emph{general graph model} is the augmented general graph model without random edge queries, i.e., $f_1$.

\paragraph{Streaming algorithms.}
For our transformation, we use the following result on $\ell_0$-samplers.

\begin{lemma}[\cite{CorUni14}]
	\label{lem:l0-sampler}
	Let $c > 0$. There exists an $\ell_0$-sampler for turnstile streams on $\mathbb{Z}^n$ that requires $O(\log^4 n)$ bits of space, succeeds with probability $1-1/n^c$ and, if successful, outputs a non-zero entry $i$ with probability $1/N \pm 1/n^c$, where $N$ is the number of non-zero entries.
\end{lemma}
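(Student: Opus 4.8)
Since \cref{lem:l0-sampler} is quoted verbatim from \cite{CorUni14}, a ``proof'' here amounts to recalling the standard level-subsampling plus sparse-recovery construction of an $\ell_0$-sampler and checking that it meets the stated parameters. The plan is as follows. Fix a prime $p=\Theta(n^{c+2})$ and sample hash functions from families with $O(\log n)$-bit seeds: a function $g\colon[n]\to\{0,1,\dots,\lceil\log n\rceil\}$ with $\Pr[g(i)\ge j]=2^{-j}$, and, for each level $j$, a function $h_j$ sending $[n]$ into $\Theta(s)$ buckets, where $s=\Theta(c\log n)$. For every level $j$ the sketch stores, per bucket, a small gadget supporting exact $1$-sparse recovery of the restriction of $x$ to those coordinates $i$ with $g(i)\ge j$. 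Every gadget is a \emph{linear} function of $x$, so a turnstile update $x_i\mapsto x_i+\delta$ is handled by adding $\delta$ to the appropriate buckets on levels $0,\dots,g(i)$.

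First I would spell out the $1$-sparse recovery gadget. For a vector $y$ it maintains $a=\sum_i y_i$, $b=\sum_i i\,y_i$, and a fingerprint $c=\sum_i y_i\, r^{\,i}\bmod p$ for a uniform random $r\in\mathbb{Z}_p$. If $y$ is exactly $1$-sparse, say $y=v\cdot e_{i_0}$ with $v\neq 0$, then $i_0=b/a$ and $v=a$, and the identity $c\equiv a\, r^{\,b/a}\pmod p$ should hold; a polynomial-identity-testing argument shows this test wrongly certifies a non-$1$-sparse $y$ with probability at most $n/p\le n^{-(c+1)}$. Hashing the level-$j$ restriction into $\Theta(s)$ buckets and running one gadget per bucket gives an $s$-sparse recovery routine that, whenever that restriction has at most $s$ nonzero coordinates, reconstructs it exactly with probability $1-n^{-(c+1)}$ (each nonzero coordinate is alone in its bucket except with small probability, and fingerprints catch the rare collisions).

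Then I would assemble the sampler: let $j^\star$ be the least level whose restriction is nonempty and whose $s$-sparse recovery succeeds, reconstruct that restriction, and output one of its nonzero coordinates deterministically (e.g.\ the smallest index, as in \cite{CorUni14}). If $N\le s$ then already $j^\star=0$ reconstructs all of $x$; if $N>s$, then at a level $j$ with $2^{-j}N$ of order $1$ the number of surviving coordinates is $\Theta(1)$ in expectation and at most $s$ with high probability by a Chernoff bound, so a good level exists, and a union bound over the $O(\log n)$ levels gives overall success probability $1-n^{-c}$. For the distribution of the output: conditioned on $j^\star=j$ and on successful recovery, every nonzero coordinate is equally likely to have survived to level $j$, so each is returned with probability $(1/N)(1\pm o(1))$; to sharpen $o(1)$ to $n^{-c}$ one needs only that $g$ be $\Theta(c\log n)$-wise independent and a careful bound on the correlation between the event $\{j^\star=j\}$ and which coordinates survive.

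The space is $O(\log n)$ bits per stored number, hence $O(s\log n)=O(\log^2 n)$ bits per level; over $O(\log n)$ levels, with a further $O(\log n)$ factor coming (depending on the instantiation) either from the per-level sparsity or from independent repetitions used to drive the distributional error down to $n^{-c}$, this totals $O(\log^4 n)$ bits, plus $O(\log^2 n)$ bits for the hash seeds. The main obstacle is precisely the distributional analysis: bounding how the conditioning ``$j^\star$ is the first successful level'' skews which nonzero coordinate is output, and showing this skew is at most $n^{-c}$, together with handling the boundary regimes (small $N$, or $N$ just above a power of two, where the survivor count at the relevant level is not cleanly $\Theta(1)$) — this is where \cite{CorUni14} concentrates its work.
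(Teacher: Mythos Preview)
The paper does not prove \cref{lem:l0-sampler}; it is stated as a black-box citation from \cite{CorUni14} and used only as a tool in the proof of \cref{lem:reduction-ts}. There is therefore no ``paper's own proof'' to compare your proposal against. You already recognise this in your first sentence, and your sketch is a faithful high-level reconstruction of the standard level-subsampling plus sparse-recovery construction underlying \cite{CorUni14}; for the purposes of this paper, simply citing the result (as the authors do) is all that is required.
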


In pseudo code of streaming algorithms, we number the passes with respect to the current procedure, i.e., the first pass on the input in a procedure is numbered 1. Sometimes, we use parallel computation (in particular, ``\textbf{parallel for}'' loops) to enable different computations to utilize the same pass on the input. Computation that is performed during a pass on the input is placed inside a ``\textbf{pass}''-block that specifies the corresponding pass(es). For the sake of clarity, we also specify the knowledge / variables (``input'') that are available at the beginning of the pass. To make it more explicit that a streaming algorithm has queried the degrees of some vertex set $V'$, we use $\dds{V'}$ to denote a dictionary that maps every $v \in V'$ to $\dds{V'}[v] = \dg{v}$. We use $\tilde{O}(\cdot)$ to omit polylogarithmic factors and dependencies on the size of $H$.

\section{Transformation}
In this section, we present and prove a transformation that allows us to obtain constant-pass streaming algorithms from sublinear query algorithms. The number of passes depends on the level of adaptivity of the query algorithm. In particular, we consider the number of batches (\emph{rounds}) that queries can be grouped into so that a query in batch $i$ depends only on the algorithm's random coins and the answer's to queries in batches $1, \ldots, i-1$.

\begin{definition}[round-adaptive graph algorithm, cf.~\cite{CanAda18}]
    Let $\mathcal{G}$ be a set of graphs, let $\mathcal{F} = (f_1, \ldots)$ be a finite family of functions, where $f_i : \mathcal{G} \times X_i \rightarrow Y_i$ and $X_i, Y_i$ are sets, and let $k : \mathcal{G} \rightarrow \mathbb{N}$. A \emph{graph query} algorithm for a graph problem on $\mathcal{G}$ and \emph{query types} $\mathcal{F}$ can access its input $G \in \mathcal{G}$ only by asking queries $f \in \mathcal{F}$ on $G$. It is said to be \emph{$k$-round adaptive} if the following holds:
	
	The algorithm proceeds in $k(G)$ rounds. In round $\ell > 0$, it produces a sequence of queries $Q_\ell := ( \q{\ell}{i} = (\qt{\ell}{i}, \qx{\ell}{i}))_{i \in [\lvert Q_\ell \rvert]}$, where $\qt{\ell}{i} \in [\lvert \mathcal{F} \rvert]$ is a query type and $\qx{\ell}{i} \in X_{\qt{\ell}{i}}$ is the query's arguments. The sequence of queries $Q_\ell$ is based on the algorithm's own internal randomness and the answers $\mathcal{F}(Q_1), \ldots, \mathcal{F}(Q_{\ell-1})$ to the previous sequences of queries $Q_1, \ldots, Q_{\ell-1}$. In return to $Q_\ell$, the algorithms receives a sequence of query answers $\mathcal{F}(Q_\ell) = ( f_{\qt{\ell}{i}}(\q{\ell}{i}))_{i \in [\lvert Q_\ell \rvert]}$.
\end{definition}
\begin{example*}
	\normalfont Let us consider a very simple subroutine in the augmented general graph model where the goal is to find a triangle in a graph $G$. This subroutine simply does the following:
	\begin{enumerate}
		\item Sample one edge $e=(u,v)$ uniformly at random,
		\item Query the degrees of $u,v$ and find the one, say $u$, whose degree is no larger than the other,
		\item Sample a random neighbor $w$ of $u$, and
		\item Query if there exists an edge between $v$ and $w$.
	\end{enumerate}

	The above subroutine is a $4$-round adaptive graph query algorithm: In round $1$, the query set $Q_1 = ( \q{1}{1} = (1, \cdot))$ is simply one random edge and the query answer $\mathcal{F}(Q_1)$ is $(e=(u,v))$; in round $2$, the query set $Q_2 = ( \q{2}{1} = (2, u), \q{2}{2} = (3, v) )$ are the degree queries of $u,v$ and the query answer is $\mathcal{F}(Q_2)=(\dg{u}, \dg{v})$; in round $3$, the query set $Q_3 = ( \q{3}{1} = (3, (u,x) )$, where $x$ is drawn uniformly random from $[\dg{u}]$, is for a random neighbor of $u$, and the query answer is $\mathcal{F}(Q_3) = w$; in round $4$, the query set $Q_4 = ( \q{4}{1} = (4, (v,w)) )$ is if there exists an edge between $v,w$, and the query answer is $\mathcal{F}(Q_4)=(\textbf{Yes})$ if edge $(v,w)$ exists and $\mathcal{F}(Q_4)=(\textbf{No})$ otherwise.
\end{example*}

We state and prove the transformation from sublinear-time algorithms in the augmented general graph model that yields insertion-only streaming algorithms. Since the augmented general graph model subsumes the standard models for dense graphs, bounded-degree graphs and general graphs, one can directly obtain a streaming algorithm from essentially any sublinear graph query algorithm with small round-adaptivity.

\begin{theorem}
	\label{lem:reduction-io}
    Let $\mathcal{A}_Q$ be a $k$-round adaptive graph query algorithm for the augmented general graph model with query complexity $q=q(n)$ and space complexity $s=s(n)$. Then, there exists a $k$-pass algorithm $\mathcal{A}_S$ in the arbitrary-order insertion-only graph streaming model with space complexity $O(q \log n + s)$ bits so that $\mathcal{A}_S$ and $\mathcal{A}_Q$ have the same output distribution.
\end{theorem}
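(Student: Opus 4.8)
The plan is to simulate $\mathcal{A}_Q$ pass by pass, using one streaming pass per round of adaptivity. The key observation is that the four query types in the augmented general graph model can all be answered exactly by a streaming algorithm that makes a single pass over the edge stream, \emph{provided} the set of queries to be answered in that pass is fixed in advance — which is exactly what round-adaptivity guarantees. So $\mathcal{A}_S$ will run $\mathcal{A}_Q$ internally; at the start of round $\ell$, $\mathcal{A}_Q$ (being a deterministic function of its coins and the answers to rounds $1,\dots,\ell-1$, which $\mathcal{A}_S$ has already stored) produces the query batch $Q_\ell$; $\mathcal{A}_S$ then performs its $\ell$-th pass on the stream, during which it computes all answers $\mathcal{F}(Q_\ell)$; after the pass it feeds these answers back into $\mathcal{A}_Q$. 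After $k$ passes $\mathcal{A}_Q$ terminates and $\mathcal{A}_S$ outputs whatever $\mathcal{A}_Q$ outputs.

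The heart of the argument is the per-pass simulation of each query type. For a degree query ($f_2$) on $v$: initialize a counter to $0$ and increment it for every stream edge incident to $v$; at the end of the pass the counter equals $\dg{v}$. For a neighbor query ($f_3$) on $(v,i)$: this is subtler because the indexing $i \in [\dg{v}]$ must be consistent, but we can fix a canonical ordering of $v$'s neighbors by, say, the order in which incident edges appear in the stream (or by neighbor ID — any rule that is well-defined from the stream), and then the $i$-th neighbor is recovered by maintaining the appropriate partial information during the pass; to answer a batch of such queries we keep one small record per queried vertex. For an edge-existence query ($f_4$) on $(u,v)$: maintain a boolean, set it to true if $(u,v)$ is seen. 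For a uniform random edge query ($f_1$): use reservoir sampling over the stream — keep the $t$-th edge with probability $1/t$ — which yields a uniformly random edge after one pass; multiple independent random-edge queries in the same batch are handled with independent reservoir samplers running in parallel during the same pass. All of these fit inside a single pass and can be executed simultaneously (a "\textbf{parallel for}" over the $\lvert Q_\ell \rvert$ queries), so round $\ell$ costs exactly one pass, giving $k$ passes total.

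For the space bound: $\mathcal{A}_S$ must store (i) the internal state of $\mathcal{A}_Q$, which is at most $s$ bits, and (ii) the query answers accumulated so far together with the bookkeeping for the current pass's simulators. Each answer is an ID, a degree, or a boolean, so it takes $O(\log n)$ bits, and there are at most $q$ queries across all rounds; likewise each per-query simulator (a counter, a boolean, or a reservoir slot) uses $O(\log n)$ bits. Hence the total is $O(q\log n + s)$ bits, as claimed. Finally, to see that $\mathcal{A}_S$ and $\mathcal{A}_Q$ have the same output distribution, couple the two executions by giving them the same random coins: one checks by induction on $\ell$ that the sequence of query batches $Q_1,\dots,Q_\ell$ and the answers $\mathcal{F}(Q_1),\dots,\mathcal{F}(Q_\ell)$ are identical in both executions, since $\mathcal{A}_S$ reproduces $\mathcal{A}_Q$'s query generation verbatim and, by the per-query-type arguments above, computes exactly the answers $\mathcal{F}(Q_\ell)$ that $\mathcal{A}_Q$ would have received; therefore the final outputs coincide for every fixing of the coins, hence in distribution.

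The main obstacle I expect is the faithful simulation of the neighbor query $f_3$: one has to commit to a definition of "the $i$-th neighbor of $v$" that is simultaneously (a) consistent with whatever convention $\mathcal{A}_Q$ assumes, (b) recoverable from a single arbitrary-order pass, and (c) cheap to maintain for a whole batch of neighbor queries on possibly many vertices without blowing past the $O(q\log n)$ budget. The clean way is to let the augmented general graph model's neighbor ordering be \emph{defined} by the stream order (this is without loss of generality since $\mathcal{A}_Q$'s guarantees are over the choice of ordering, or one re-derives the ordering from neighbor IDs), after which maintaining, for each queried $(v,i)$, a single "current $i$-th candidate" updated as incident edges arrive suffices. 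Everything else — counters, booleans, reservoir samplers, and the inductive coupling — is routine.
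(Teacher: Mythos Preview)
Your proposal is correct and follows essentially the same approach as the paper: one pass per round, with $f_1$ handled by reservoir sampling, $f_2$ by a per-vertex counter, $f_4$ by a boolean, and $f_3(v,i)$ by counting edges incident to $v$ in stream order and returning the other endpoint of the $i$-th one. The paper dispatches the $f_3$ issue in a single sentence (implicitly taking the neighbor ordering to be the stream order), whereas you flag it as the main obstacle and then resolve it the same way; your explicit coupling argument for equality of output distributions is a nice addition that the paper leaves implicit.
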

\begin{proof}
    Let $Q_1, \ldots, Q_k$ be the $k$ query sets that are asked by $\mathcal{A}_Q$. We define $\mathcal{A}_S$ to be the algorithm that sequentially computes $\mathcal{F}(Q_i)$, given $Q_1,\dots, Q_{i-1}$ and $\mathcal{F}(Q_1),\dots,\mathcal{F}(Q_{i-1})$, for $i\leq k$. We prove that $\mathcal{A}_S$ can compute the answers to $\mathcal{F}(Q_i)$ in pass $i$. Let $i \in [k]$, let $j \in [\lvert Q_i \rvert]$ and consider query $\q{i}{j} = (\qt{i}{j}, \qx{i}{j}) \in Q_i$. We distinguish the query type $f_{\qt{i}{j}}$ and explain how the algorithm emulates the query oracle:
    \begin{itemize}
      \item \textbf{$f_1$ (uniform edge):} A uniformly random edge can be obtained from the stream via reservoir sampling using $O(\log n)$ bits of space.
      \item \textbf{$f_2(v)$ (degree):} A counter of the degree of $v$ can be maintained while reading edges from the stream, using $O(\log n)$ bits of space.
      \item \textbf{$f_3(v,i)$ (neighbor):} The algorithm initializes a counter for $v$ that counts the number of edges read from the stream that are incidient to $v$. Once the counter reaches the value $i$, the algorithm returns $u$ from the edge $(u,v)$ it just read. This requires $O(\log n)$ bits of space.
      \item \textbf{$f_4(u,v)$ (adjacency):} The algorithm maintains a boolean variable that indicates whether the edge $(u,v)$ was read from the stream, which requires $O(\log n)$ bits of space.
    \end{itemize}
	The total space required to store all query answers is thus $O(q \log n)$. To emulate the original algorithm, one needs $O(s)$ space.
\end{proof}

To adapt sublinear graph query algorithms to \emph{turnstile} streams, we propose the following relaxed version of the augmented general graph model.

\begin{definition}
    Let $c > 0$. The \emph{relaxed augmented general graph model} is defined for the set of all graphs. For a graph $G=(V,E)$, where $V = [n]$, it allows four types of queries:
	\begin{enumerate}
		\item[($f_1$)] for every edge $e \in E$, returns $e$ with probability $1 / m \pm 1 / n^c$, or fails with probability at most $1/n^c$;
		\item[($f_2$)] given $v \in V$, returns the degree of $v$;
		\item[($f_3$)] given $v \in V$, for every $u \in \Gamma(v)$, returns $u$ with probability $1/\dg{u} \pm 1 / n^c$, or fails with probability at most $1/n^c$;
		\item[($f_4$)] given $u,v \in V$, returns whether $(u,v) \in E$.
	\end{enumerate}
	The probabilities are taken over the random coins of the respective query.
\end{definition}

This model differs in two aspects from the augmented general graph model: First, random edges that are queried via $f_1$ are not exacty uniformly random. Second, instead of asking for the $i$\xth/ neighbor of a vertex $v$ via $f_3$, one can only obtain an approximately uniformly random neighbor of $v$. Intuitively, these relaxed guarantees weaken the solution quality and the complexity of most sublinear algorithms only slightly. In particular, we prove this for subgraph counting. As a benefit of this model, we show that $k$-round adaptive graph query algorithms translate into $k$-pass \emph{turnstile} streaming algorithms.

\begin{theorem}
	\label{lem:reduction-ts}
    Let $\mathcal{A}_Q$ be a $k$-round adaptive graph query algorithm for the relaxed augmented general graph model with query complexity $q=q(n)$ and space complexity $s=s(n)$. Then, there exists a $k$-pass algorithm $\mathcal{A}_S$ in the arbitrary-order turnstile graph streaming model with space complexity $O(q \log^4 n + s)$ bits so that $\mathcal{A}_S$ and $\mathcal{A}_Q$ have the same output distribution.
\end{theorem}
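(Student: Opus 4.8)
The plan is to follow the proof of \cref{lem:reduction-io} essentially verbatim, replacing only the streaming primitives used to emulate the four query types, since the relaxation of the oracle is exactly tailored to what an $\ell_0$-sampler can provide on a turnstile stream. As before, I would let $\mathcal{A}_S$ be the algorithm that in pass $i$ computes the answers $\mathcal{F}(Q_i)$ to the $i$-th batch of queries and then runs $\mathcal{A}_Q$'s internal computation on them; the batch $Q_i$ is fully determined by $\mathcal{A}_Q$'s coins together with $Q_1,\dots,Q_{i-1}$ and $\mathcal{F}(Q_1),\dots,\mathcal{F}(Q_{i-1})$, so it is known at the start of pass $i$, and running $\mathcal{A}_Q$ costs $O(s)$ extra space. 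Each edge update in the stream is interpreted as an increment or decrement of the corresponding coordinate of the edge-indicator vector (one coordinate per potential edge, so a domain of size $\mathrm{poly}(n)$ and hence still $O(\log^4 n)$ bits per sampler by \cref{lem:l0-sampler}).

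The two deterministic query types are handled as in \cref{lem:reduction-io}: $f_2(v)$ by maintaining during the pass a signed counter over the edges incident to $v$, which equals $\dg{v}$ at the end of the pass because the final state of the stream is a simple graph; $f_4(u,v)$ by maintaining the signed counter of coordinate $(u,v)$ and reporting whether it is nonzero. Each costs $O(\log n)$ bits. For the two sampling query types I would invoke the $\ell_0$-sampler of \cref{lem:l0-sampler}, instantiated with the confidence parameter $c$ from the definition of the relaxed model, using fresh independent randomness for each query. A query $f_1$ in batch $i$ is answered by running, during pass $i$, an $\ell_0$-sampler fed by all edge updates; its nonzero support at the end of the pass is exactly $E$ with $|E|=m$, so it returns some $e\in E$ with probability $1/m \pm 1/n^c$ and fails with probability at most $1/n^c$, matching the $f_1$ oracle. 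A query $f_3(v)$ in batch $i$ is answered by running an $\ell_0$-sampler on the coordinates $\{(u,v):u\in V\}$, fed only by updates of edges incident to $v$ (the vertex $v$ is known before the pass starts); its support is $\Gamma(v)$, so it returns each $u\in\Gamma(v)$ with probability $1/\dg{v}\pm1/n^c$, matching the $f_3$ oracle. There are at most $q$ queries in total, hence at most $q$ samplers, each using $O(\log^4 n)$ bits, for a total of $O(q\log^4 n + s)$ bits.

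It remains to check that $\mathcal{A}_S$ and $\mathcal{A}_Q$ have the same output distribution. Each of the four primitives above reproduces exactly the conditional answer distribution of the corresponding oracle in the relaxed augmented general graph model, and both the model's oracle answers and our samplers use fresh randomness that is independent across queries and independent of $\mathcal{A}_Q$'s coins. Hence the joint distribution of the sequence of query answers fed into the emulated copy of $\mathcal{A}_Q$ inside $\mathcal{A}_S$ coincides with the joint distribution of the answers $\mathcal{A}_Q$ receives when run directly with the relaxed oracle, which gives equality of output distributions. There is no genuinely hard step here; the point that deserves the most care is this coupling, namely verifying that an independent sampler instance per query — rather than a single shared sampler — is what the relaxed model prescribes, and that the $f_3$ sampler for a query $f_3(v)$ must be restricted to the coordinates incident to the right vertex, which is available before the pass begins because queries within a round are non-adaptive.
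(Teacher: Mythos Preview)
Your proposal is correct and follows essentially the same approach as the paper: emulate each round by a pass, answer $f_1$ and $f_3$ with an independent $\ell_0$-sampler on the edge-indicator vector (restricted to the relevant row for $f_3$), and answer $f_2,f_4$ with simple counters, yielding $O(q\log^4 n + s)$ bits. Your write-up is in fact slightly more careful than the paper's, since you make explicit the coupling that justifies equality of output distributions and the fact that the argument $v$ of an $f_3$ query is known before the pass begins.
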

\begin{proof}
    Let $Q_1, \ldots, Q_k$ be the $k$ query sets that are asked by $\mathcal{A}_Q$. We define $\mathcal{A}_S$ to be the algorithm that sequentially computes $\mathcal{F}(Q_i)$, given $Q_1,\dots, Q_{i-1}$ and $\mathcal{F}(Q_1),\dots,\mathcal{F}(Q_{i-1})$, for $i\leq k$. We prove that $\mathcal{A}_S$ can compute the answers to $\mathcal{F}(Q_i)$ in pass $i$. Let $i \in [k]$, let $j \in [\lvert Q_i \rvert]$ and consider query $\q{i}{j} = (\qt{i}{j}, \qx{i}{j}) \in Q_i$. We distinguish the query type $f_{\qt{i}{j}}$ and explain how the algorithm emulates the query oracle:
    \begin{itemize}
      \item \textbf{$f_1$ (uniform edge):} The algorithm maintains an $\ell_0$-sampler of the adjacency matrix of the graph. By \cref{lem:l0-sampler}, this requires $O((\log n^2)^4) = O(\log^4 n)$ bits of space.
      \item \textbf{$f_2(v)$ (degree):} A counter of the degree of $v$ can be maintained while reading insertions and deletions of edges that are incident to $v$ from the stream, using $O(\log n)$ bits of space.
      \item \textbf{$f_3(v)$ (random neighbor):} The algorithm maintains an $\ell_0$-sampler of the adjacency list of $v$. This requires $O(\log^4 n)$ bits of space by \cref{lem:l0-sampler}.
      \item \textbf{$f_4(u,v)$ (adjacency):} The algorithm maintains a boolean variable that indicates whether the last update of $(u,v)$ that was read from the stream was an insertion or a deletion, which requires $O(\log n)$ bits of space.
    \end{itemize}
	The total space required to store all query answers is thus $O(q \log^4 n)$. To emulate the original algorithm, one needs $O(s)$ space.
\end{proof}

\section{Subgraph counting and sampling}
\label{sec:subgraph-counting}

In this section, we analyze the round-adaptivity of the sublinear algorithm for sampling uniformly random copies of a given subgraph $H$ by Fichtenberger, Gao and Peng \cite{fichtenberger2020sampling}, which can also be easily adapted to obtain a subgraph counting algorithm,  in the augmented general graph model. We show that this algorithm is $3$-round adaptive. Therefore, it yields a $3$-pass streaming algorithm for sampling and counting subgraphs via \cref{lem:reduction-io}. 

\subsection{A sublinear-time algorithm for counting arbitrary subgraphs}

We make use of a subroutine from \cite{fichtenberger2020sampling} for sampling a copy of subgraph $H$ in $G$, and we refer to this subroutine as the \emph{FGP algorithm} in the following. To describe the FGP algorithm, we state the relevant definitions.

\begin{definition}[vertex order]
	Let $G=(V,E)$ be a graph, let $u, v \in V$. We define $u \prec_G v$ if and only if $\dg[G]{u} < \dg[G]{v}$, or $\dg[G]{u} = \dg[G]{v}$ and $id(u) < id(v)$.
\end{definition}

\begin{definition}[canonical cycle]
	\label{def:can_cycle}
	Let $G=(V,E)$ be a graph and let $E' \subseteq E$. A sequence of vertices $(u_1, \ldots, u_k)$ is a canonical $k$-cycle in $(E', \prec_G)$ if, for all $i \in [k]$, $(u_i, u_{i + 1 \mod{k+1}}) \in E'$ and, for $2 \leq i \leq k$, $u_1 \prec u_{i}$ and $u_k \prec u_2$.
\end{definition}

\begin{definition}[canonical star]
	\label{def:can_star}
	Let $G=(V,E)$ be a graph and let $E' \subseteq E$. A sequence of vertices $(u_0 u_1, \ldots, u_k)$ is a canonical $k$-star in $(E', \prec_G)$ if, for all $i \geq 1$, $(u_0, u_i) \in E'$, and, for $1 \leq i < k$, $u_i \prec u_{i+1}$.
\end{definition}

\paragraph{High-level idea of the FGP algorithm} The idea of the FGP algorithm is to first compute a decomposition of the subgraph $H$ into odd cycles and stars according to \cref{lem:subgraph-decomposition}, and design subroutines to sample each canonical cycle of length $2k+1$ with probability $1/(2m)^{k+1/2}$, and each canonical $k$-petal star with probability $1/(2m)^{k}$. Together with the relation between the fractional edge cover $\rho(H)$ of a subgraph $H$ and its decomposition into odd length cycles and stars, the FGP algorithm then uses these samples to output a subgraph such that for any copy of $H$, it is output with probability $1/(2m)^{\rho(H)}$. %

More precisely, the algorithms tries to sample a copy of $H$ by sampling canonical cycles and stars according to \cref{def:can_cycle,def:can_star}. For a canonical $k$-star $(u_0, \ldots, u_k)$, it simply samples $k$ random edges $(v_1, w_1), \ldots, (v_k,w_k)$ and checks if the sampled edges form indeed a $k$-star subgraph, $v_1 = \ldots = v_k$ and $w_i \prec w_{i+1}$ for all $i \in [k-1]$. For a canonical odd cycle $(u_1, \ldots, u_{2k+1})$, it tries to sample every second edge, i.e., $(u_1, u_2), (u_3, u_4), \ldots, (u_{2k-1}, u_{2k})$. Note that only $u_{2k+1}$ is still unknown to the algorithm. Now, the algorithm proceeds based on the following case distinction: either, all the vertices in the cycle have degree greater than $\sqrt{2m}$. Then, one endpoint of a uniformly random edge is $u_{2k+1}$ with probability $\dg{u} / 2m \approx 1/\sqrt{2m}$. Otherwise, it samples the $i$\xth/ neighbor of $u_1$ (if it exists), where $i$ is drawn uniformly at random from $[\sqrt{2m}]$. Since $u_1$ has degree less than $\sqrt{2m}$, $u_{2k+1}$ is sampled with probability $1/\sqrt{2m}$. Then, the algorithm checks whether the sampled edges form a cycle of length $2k+1$, and whether it is canonical.

For the sake of completeness, we provide pseudo code of the FGP algorithm (i.e., \cref{subgraph-sample} \textsc{SampleSubgraph}) in \cref{sec:sublineartimesubgraph}. Its performance guarantee is given in the following lemma. 
\begin{lemma}[\cite{fichtenberger2020sampling}, Lemma 8]
	\label{lem:query-sampler}
	Let $H$ be an arbitrary subgraph of constant size. The FGP algorithm takes an input graph $G=(V,E)$ and the number of edges $m$ in $G$, uses $O(1)$ queries in expectation and guarantees the following: For a fixed copy of $H$ in $G$, the probability that $H$ is returned is $1 / (2m)^{\rho(H)}$.
\end{lemma}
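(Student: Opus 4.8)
The plan is to reduce the statement to a per-component analysis via the subgraph decomposition of \cref{lem:subgraph-decomposition}, and then to verify that the two sampling subroutines of the FGP algorithm — one for canonical odd cycles, one for canonical stars — each capture a fixed canonical copy of their component with exactly the probability dictated by its fractional edge-cover number. Concretely, I would fix an optimal decomposition of $H$ (assuming, as we may, that $H$ has no isolated vertex, else $\rho(H)=\infty$ and the claim is vacuous) into vertex-disjoint odd cycles $\overline{C_1},\dots,\overline{C_\alpha}$ and stars $\overline{S_1},\dots,\overline{S_\beta}$ with $\rho(H)=\sum_i\rho(\overline{C_i})+\sum_j\rho(\overline{S_j})$; the vertex sets of these components partition $V(H)$, while their edge sets need not exhaust $E(H)$. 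The FGP algorithm runs the cycle subroutine on each $\overline{C_i}$ and the star subroutine on each $\overline{S_j}$ with mutually independent randomness, and finally checks via adjacency queries ($f_4$) that every remaining edge of $H$ is present. For a fixed copy of $H$ the verification succeeds deterministically, so the probability that exactly this copy is returned equals the product, over all components, of the probability that the corresponding subroutine outputs the canonical labeling of that component on the prescribed vertex set — a labeling that is unique by \cref{def:can_cycle,def:can_star}. Hence it suffices to show (i) the star subroutine outputs a fixed canonical $k$-star with probability exactly $(2m)^{-k}=(2m)^{-\rho(S_k)}$, and (ii) the cycle subroutine outputs a fixed canonical $(2k+1)$-cycle with probability exactly $(2m)^{-(k+1/2)}=(2m)^{-\rho(C_{2k+1})}$; multiplying over components then yields $(2m)^{-\rho(H)}$.

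For (i), the star subroutine draws $k$ independent uniformly random darts (each an $f_1$ query with a uniformly random orientation, hence uniform over the $2m$ ordered adjacent pairs). Each of the $k$ edges at the star's center is hit in the correct orientation with probability $1/(2m)$, and since canonicity fixes the strictly increasing petal order $u_1\prec\cdots\prec u_k$, the only outcome passing the subroutine's checks ($v_1=\dots=v_k$ and $w_i\prec w_{i+1}$) that produces this star is the one with the $i$\xth/ dart equal to $(u_0,u_i)$; hence the probability is exactly $(2m)^{-k}$, with no double counting.

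For (ii), the cycle subroutine first samples the $k$ pairwise-disjoint alternating darts $(u_1,u_2),(u_3,u_4),\dots,(u_{2k-1},u_{2k})$, contributing $(2m)^{-k}$, and must then produce the last vertex $u_{2k+1}$ with probability exactly $1/\sqrt{2m}$. I would treat the algorithm's two cases. If every cycle vertex has degree greater than $\sqrt{2m}$, take a uniformly random endpoint of a uniformly random edge; it equals $u_{2k+1}$ with probability $\dg{u_{2k+1}}/(2m)>1/\sqrt{2m}$, so a rejection step (using a degree query on the sampled endpoint) renormalizes the net probability to exactly $1/\sqrt{2m}$. Otherwise, canonicity makes $u_1$ the $\prec_G$-minimum and hence minimum-degree vertex of the cycle, so $\dg{u_1}\le\sqrt{2m}$; drawing a uniformly random index $i$ from a range of size $\Theta(\sqrt{2m})$ and returning the $i$\xth/ neighbour of $u_1$ (failing if $i>\dg{u_1}$), followed by a suitable rejection step, again yields $u_{2k+1}$ with probability exactly $1/\sqrt{2m}$. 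The subroutine then checks that the sampled vertices and darts form a $(2k+1)$-cycle and that it is canonical, so this canonical cycle is output with probability exactly $(2m)^{-k}\cdot(2m)^{-1/2}$.

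For the query bound, each star component uses $k=O(1)$ edge queries, each cycle component uses $O(1)$ edge, degree, and neighbour queries (a rejected sample causes the subroutine to report failure rather than to retry), and the final verification uses $O(1)$ adjacency queries; since $H$ has constant size, summing over the $O(1)$ components gives $O(1)$ queries in expectation (indeed on every run). The main obstacle is part (ii): forcing the last-vertex probability to be exactly $1/\sqrt{2m}$ rather than merely $\Theta(1/\sqrt{2m})$, which is what necessitates the degree-based case split and the careful rejection step — and it is precisely here that canonicity matters, both to guarantee that the minimum-degree cycle vertex is $u_1$ and to ensure that, for a fixed copy, the favourable random outcomes are exactly the intersection of the per-component capture events (no reconstruction ambiguity), so that the output probability is a clean product over components.
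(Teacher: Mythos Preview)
The paper does not give its own proof of this lemma; it is quoted from \cite{fichtenberger2020sampling} (their Lemma~8), with only the pseudo code of the FGP algorithm reproduced in \cref{sec:sublineartimesubgraph}. Your outline is the natural per-component argument, and your analyses of the canonical-star and canonical-odd-cycle subroutines are essentially correct.

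There is one genuine gap, however. You assert that the probability of returning a fixed copy of $H$ equals the product of the per-component capture probabilities, because each component has a unique canonical labeling. Uniqueness of the labeling is fine, but the \emph{assignment} of abstract components $\overline{C_i},\overline{S_j}$ to concrete vertex subsets of the fixed copy is not in general unique: whenever the decomposition of $H$ contains two isomorphic parts (two triangles, two $k$-petal stars, etc.), the same copy of $H$ in $G$ arises from several distinct tuples of successful subroutine outputs. The FGP algorithm handles exactly this via the final coin flip in \textsc{SampleSubgraph} (\cref{subgraph-sample}, line~\ref{alg:occurrence_H}), which accepts only with probability $1/f_{\overline{T}}(H)$, where $f_{\overline{T}}(H)$ is the number of such realizations and depends only on $H$ and the chosen decomposition type~$\overline{T}$. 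Your argument omits both the multiplicity and this coin flip; the two omissions cancel, so your final number $1/(2m)^{\rho(H)}$ is right, but as written the reasoning does not establish it for any $H$ whose optimal decomposition has repeated parts (e.g.\ $H=K_4$, decomposed into two disjoint edges).
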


\subsection{Insertion-only streaming algorithm}
\label{sec:subgraph-counting-io}

For the sake of presentation, we start with an insertion-only algorithm. Since \cref{lem:reduction-io} transforms round-adaptive query algorithms in the (standard) augmented general graph model to insertion-only streaming algorithms, our only objective in this section is to prove the round-adaptivity of the FPG algorithm.

\begin{lemma}
	\label{lem:stream-sampler}
	Let $H$ be an arbitrary subgraph of constant size. There exists a $3$-pass insertion-only streaming algorithm (i.e., \cref{alg:stream-subgraph}) that has space complexity $O(\log n)$ and returns a copy of $H$ or nothing. For any fixed copy of $H$ in the input graph, it is returned with probability $1 / (2m)^{\rho(H)}$.
\end{lemma}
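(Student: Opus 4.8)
The plan is to invoke the transformation from \cref{lem:reduction-io} together with the performance guarantee of the FGP algorithm (\cref{lem:query-sampler}), so the entire burden reduces to showing that the FGP algorithm, as stated in \cref{def:can_cycle,def:can_star} and sketched in the high-level description, can be organized into exactly $3$ rounds of non-adaptive queries in the augmented general graph model. First I would recall the structure of a single call to the FGP subroutine: it samples some random edges via $f_1$, it needs the degrees of the endpoints of those edges via $f_2$ (both to determine the vertex order $\prec_G$ and to decide the case distinction for odd cycles), it possibly asks for one random/indexed neighbor via $f_3$ (for the low-degree case of an odd cycle, sampling the $i$\xth/ neighbor of $u_1$ for random $i \in [\sqrt{2m}]$), and finally it checks adjacency of various pairs via $f_4$ to verify that the sampled vertices actually form a canonical copy of $H$. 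The claim I must defend is that these four stages collapse into three rounds, because the last two query types can be issued simultaneously.

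The key steps, in order: (1) Round~1 consists solely of the $f_1$ queries — a constant number of uniformly random edges, one batch for each star in the decomposition (a $k$-star needs $k$ random edges) and $k$ random edges per odd $(2k+1)$-cycle. These depend only on the algorithm's coins, so they form a legitimate first round. (2) Round~2 consists of the $f_2$ degree queries on every endpoint of every edge returned in Round~1 — a constant number of queries, each depending only on $\mathcal{F}(Q_1)$. After Round~2 the algorithm knows $\prec_G$ restricted to all relevant vertices and knows, for each odd cycle, whether it is in the ``all degrees $> \sqrt{2m}$'' case or not; in the latter case it also knows $\dg{u_1}$, which is what is needed to draw the random index $i \in [\dg{u_1}]$ for the $f_3$ query (here I would note that the algorithm can fold the ``$i$ uniform in $[\sqrt{2m}]$, fail if $i > \dg{u_1}$'' description into ``$i$ uniform in $[\dg{u_1}]$'' after learning the degree, or keep it literally as in the pseudocode — either way $\dg{u_1}$ from Round~2 suffices to form the query). (3) Round~3 consists of \emph{both} the $f_3$ neighbor queries (for the low-degree cycles) \emph{and} all the $f_4$ adjacency checks needed to test canonicality. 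The subtle point is whether any $f_4$ check depends on the answer to an $f_3$ query in the same round: for a canonical odd cycle $(u_1,\dots,u_{2k+1})$, once $u_{2k+1}$ is obtained from $f_3$, one must check the edges $(u_{2k}, u_{2k+1})$ and $(u_{2k+1}, u_1)$ — but in the augmented general graph model the algorithm need not \emph{check} these: it can instead obtain $u_{2k+1}$ as a neighbor of $u_1$ via $f_3$ (so $(u_{2k+1}, u_1) \in E$ automatically) and, rather than checking $(u_{2k}, u_{2k+1})$, it samples the every-second edges so that $(u_{2k-1}, u_{2k})$ is among them and it only needs to verify that the already-known edges close up into a cycle of the right length plus the canonicality inequalities on degrees/ids — all of which involve only vertices and degrees known after Round~2. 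The remaining $f_4$ checks (e.g.\ chords that must be absent, or edges joining different components of the decomposition back into $H$, and for stars the check that the sampled edges share a common center and satisfy $w_i \prec w_{i+1}$) involve only Round-1 endpoints and Round-2 degrees, so they too belong to Round~3 and do not depend on the $f_3$ answers. Hence the algorithm is $3$-round adaptive.

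Once the $3$-round-adaptivity is established, \cref{lem:reduction-io} yields a $3$-pass insertion-only streaming algorithm $\mathcal{A}_S$ with space $O(q \log n + s)$ and the same output distribution as the FGP algorithm; since $H$ has constant size, $q = O(1)$ in expectation and $s = O(\log n)$ (the FGP algorithm only stores a constant number of vertex ids and degrees), so the space is $O(\log n)$, and by \cref{lem:query-sampler} each fixed copy of $H$ is returned with probability $1/(2m)^{\rho(H)}$. A minor technical point I would address: \cref{lem:reduction-io} is stated for query algorithms with a fixed (worst-case) bound on $q$, whereas $q = O(1)$ holds only in expectation for FGP; this is handled by truncating the algorithm after $O(\log n)$ times its expected number of queries (failing and outputting nothing if the budget is exceeded), which by Markov's inequality changes the per-copy success probability by at most a $(1 \pm n^{-\Omega(1)})$ factor — harmless for the subsequent counting application, and in fact, since FGP here makes a single constant-size pass of sampling whose worst-case query count is already bounded by a constant depending only on $H$, no truncation is even needed. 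I expect the main obstacle to be precisely the bookkeeping in step~(3): arguing carefully, case by case over cycles and stars in the decomposition, that every canonicality/closure check can be scheduled in the same round as the $f_3$ queries without circular dependence; this is exactly where one must lean on the specific way the FGP algorithm chooses \emph{which} edges to sample (every second edge of a cycle) so that the unknown vertex $u_{2k+1}$ appears in at most the two cycle-edges incident to it, one of which is automatically satisfied by how $u_{2k+1}$ is obtained.
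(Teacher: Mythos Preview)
Your proposed round partition $(f_1;\ f_2;\ f_3{+}f_4)$ has a genuine gap in Round~3. In the low-degree branch for an odd $(2k{+}1)$-cycle, the vertex $w=u_{2k+1}$ is only revealed by the $f_3$ query, and several subsequent queries \emph{do} depend on it. Concretely: among the cycle edges $(u_1,u_2),(u_2,u_3),\ldots,(u_{2k},u_{2k+1}),(u_{2k+1},u_1)$, the sampled edges are $(u_1,u_2),(u_3,u_4),\ldots,(u_{2k-1},u_{2k})$ and the edge $(u_{2k+1},u_1)$ comes for free from $f_3$, but the edge $(u_{2k},u_{2k+1})$ is neither sampled nor guaranteed by $f_3$ and must be tested via $f_4(u_{2k},w)$ --- which is impossible in the same round as the $f_3$ call that produces $w$. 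Your sentence ``it only needs to verify that the already-known edges close up into a cycle'' is exactly where the argument breaks. In addition, the canonicality condition $w\prec v_1$ requires $\dg{w}$ (an $f_2$ query on a vertex unknown before $f_3$), and the final check that $G[V']$ spans a copy of $H$ requires $f_4$ queries between $w$ and vertices in \emph{other} cycles and stars of the decomposition. None of these can sit in the same round as the $f_3$ query.

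The paper's partition fixes this by swapping the roles of Rounds~2 and~3: Round~1 does all $f_1$ samples (including the extra edge reserved for the high-degree branch) and computes $m$; Round~2 issues the $f_3$ query $f_3(u_{i,1},j)$ with $j$ uniform in $[\sqrt{2m}]$; Round~3 then asks all degree queries $f_2$ and all pair queries $f_4$ over $V'=(C_i')\cup(S_i)$, which now includes every $w_i$. The key observation you are missing is that the $f_3$ query does \emph{not} need $\dg{u_{i,1}}$ in advance --- one simply asks for the $j$\xth/ neighbor with $j\in[\sqrt{2m}]$ and lets the query fail when $j>\dg{u_{i,1}}$ --- so the low/high-degree case distinction can be deferred to postprocessing after Round~3, once the degrees are known. (Your remark that one can ``fold'' the sampling into $j$ uniform in $[\dg{u_1}]$ is also incorrect: that would change the per-neighbor probability from $1/\sqrt{2m}$ to $1/\dg{u_1}$ and break the $1/(2m)^{\rho(H)}$ guarantee.)
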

\begin{proof}
We prove that the FGP-algorithm (see \cref{subgraph-sample} \textsc{SampleSubgraph} in \cref{sec:sublineartimesubgraph}) is $3$-round adaptive by devising a partition of its queries into $3$ rounds.

First, the FGP-algorithm computes a decomposition of $H$ into $\alpha \geq 0$ odd cycles with lengths $c_1, \ldots, c_\alpha$ and $\beta \geq 0$ stars with $s_1, \ldots, s_\beta$ petals according to \cref{lem:subgraph-decomposition} without making any queries. In the first round, the FGP-algorithm samples a set of edges that will be used to form potential cycles and stars; in the second round, the algorithm samples a random neighbor of some vertex in each of the potential cycles; in the third round, the algorithm performs vertex pair queries to check if these potential cycles and stars are indeed cycles and stars, respectively. Then we run a postprocessing on the collected subgraph and output a copy $H$ if it is found. %
By the proof of \cref{lem:query-sampler}, any copy of $H$ is output with probability $1 / (2m)^{\rho(H)}$.
	
	For the sake of presentation, we state the algorithm as a streaming algorithm in \cref{alg:stream-subgraph}. The space complexity of this algorithm then follows from \cref{lem:query-sampler,lem:reduction-io}.

	We argue that \cref{alg:stream-subgraph} is indeed a 3-pass algorithm. To emulate the queries, we invoke \cref{lem:reduction-io}. In the first pass, the algorithm only needs to know the decomposition of $H$ to sample sufficiently many edges, and it computes the number of edges $m$. Prior to the second pass, it needs to know $m$ to sample $i$ from $\{1, \ldots, \sqrt{2m}\}$. The third pass checks only the existence of edges between the known vertices in $V' = (C'_i)_{i \in [\alpha]} \cup (S_i)_{i \in [\beta]}$ and computes their degrees. Afterwards, the algorithm has obtained the subgraph induced by all vertices and the degrees of these vertices in $G$.
	
	We note that the information collected by \cref{alg:stream-subgraph} is enough to check whether the sampled cycles and stars are canonical and to check whether $G[V']$ spans or induces a copy of $H$, which are the only checks performed by \cref{line:checkstart} -- \cref{line:checkstop}.
\end{proof}

Since there are $\# H$ copies of $H$ in $G$, the probability that \cref{alg:stream-subgraph} returns a copy of $H$ is $\# H / (2m)^{\rho(H)}$ by \cref{lem:stream-sampler}. Then the subgraph counting algorithm can be obtained by viewing the above subgraph sampler as a biased coin. That is, let $p=\frac{\# H}{(2m)^{\rho(H)}}$ be the probability of a coin getting a \textsc{Heads} on a flip (which corresponds to a copy of $H$ returned by \cref{alg:stream-subgraph}). By a standard Chernoff bound argument, one can obtain a multiplicative approximation of $p$ by flipping it sufficiently many times and counting how often it turns up heads. Formally, we have the following theorem.

\begin{theorem}
	\label{lem:io-counting}
	Let $\epsilon > 0$ and let $H$ be an arbitrary subgraph of constant size. There exists a $3$-pass insertion-only streaming algorithm for computing a $(1+\epsilon)$-approximation of the number of copies of $H$ in the input graph $G$ with high probability that has space complexity $\tilde{O}(m^{\rho(H)} / (\epsilon^2 L))$, where $L$ is a lower bound on $\# H$.
\end{theorem}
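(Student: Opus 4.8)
The plan is to turn the sampler of \cref{lem:stream-sampler} into a counting algorithm by running many independent copies of it in parallel and using concentration of the number of successes. Fix the subgraph $H$ and write $p := \# H / (2m)^{\rho(H)}$ for the probability that a single copy of \cref{alg:stream-subgraph} returns \emph{some} copy of $H$; since the events ``copy $c$ of $H$ is returned'' are disjoint over the $\# H$ copies and each occurs with probability $1/(2m)^{\rho(H)}$ by \cref{lem:stream-sampler}, this identity is exact. First I would run $N$ independent instances of \cref{alg:stream-subgraph} in parallel during the same three passes over the stream (this is exactly the ``\textbf{parallel for}'' idiom mentioned in the preliminaries; the only shared quantity needed between passes is $m$, which every instance computes identically in pass~1). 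Let $X$ be the number of instances that return a copy of $H$; then $\mathbb{E}[X] = Np$, and the estimator $\widehat{\# H} := \frac{(2m)^{\rho(H)}}{N} \cdot X$ is unbiased for $\# H$.

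Next I would choose $N$ large enough for a multiplicative Chernoff bound to give a $(1\pm\epsilon)$-approximation with high probability. Since $X$ is a sum of $N$ independent Bernoulli$(p)$ variables, $\Pr[\,|X - Np| > \epsilon N p\,] \le 2\exp(-\epsilon^2 Np/3)$, so it suffices to take $Np = \Theta(\epsilon^{-2}\log n)$, i.e. $N = \Theta\!\big(\epsilon^{-2}\log n / p\big) = \Theta\!\big(\epsilon^{-2}\log n \cdot (2m)^{\rho(H)} / \# H\big)$. Because $\# H$ is unknown, I would substitute the lower bound $L \le \# H$; this only increases $N$ (and hence the space), and a standard coupling / monotonicity argument shows the success probability does not decrease — with $N$ tuned to $L$ we still have $\mathbb{E}[X] = Np \ge \Theta(\epsilon^{-2}\log n)$, which is all the Chernoff bound needs for the upper-tail and lower-tail deviation events to have probability $n^{-\Omega(1)}$. (One should note that $L \le \# H \le m^{\rho(H)}$ by \cite{atserias2008size}, so $N \ge 1$ is not an issue, and the $2^{\rho(H)}$ and $\mathrm{poly}(|H|)$ factors are absorbed into the $\tilde O(\cdot)$.)

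For the space bound, each of the $N$ instances uses $O(\log n)$ bits by \cref{lem:stream-sampler}, and they all run within $3$ passes, so the total space is $N \cdot O(\log n) = O\!\big(\epsilon^{-2}\log n \cdot \log n \cdot (2m)^{\rho(H)} / L\big) = \tilde O\!\big(m^{\rho(H)} / (\epsilon^2 L)\big)$, plus lower-order bookkeeping for $X$, $m$, and the decomposition of $H$, which is $O(\log n)$ and hidden in $\tilde O$. The pass count is inherited directly: three passes, as in \cref{lem:stream-sampler}.

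The main thing to be careful about — rather than a deep obstacle — is the handling of the unknown normalization $\# H$: one must argue that plugging in a lower bound $L$ is sound, namely that the algorithm is allowed to over-sample and that over-sampling only helps concentration (equivalently, one can always ignore a random subset of the successful instances to reduce $\mathbb{E}[X]$ back to $\Theta(\epsilon^{-2}\log n)$ and apply Chernoff there, though it is cleaner to just observe that both tail bounds improve as $N$ grows). A secondary point is that the FGP sampler returns ``a copy of $H$ or nothing'' with the per-copy probability $1/(2m)^{\rho(H)}$ and these per-copy events are disjoint, so summing gives the success probability $p = \#H/(2m)^{\rho(H)}$ exactly — this is what makes the biased-coin reduction clean and avoids any error from multiple copies being reported simultaneously. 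No other step requires more than the cited lemmas and a textbook Chernoff bound.
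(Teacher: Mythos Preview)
Your proposal is correct and follows essentially the same approach as the paper: run $\Theta\!\big((2m)^{\rho(H)}\log n/(\epsilon^2 L)\big)$ independent parallel copies of the sampler from \cref{lem:stream-sampler}, observe that each succeeds with probability exactly $\#H/(2m)^{\rho(H)}$, and apply a multiplicative Chernoff bound to the number of successes. Your write-up is in fact somewhat more careful than the paper's in justifying why substituting the lower bound $L$ for $\#H$ is sound and why the per-copy events are disjoint.
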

\begin{proof}
	We run $k = 30 (2m)^{\rho(H)} \ln n \, / \, (\epsilon^2 L)$ copies of \cref{alg:stream-subgraph} in parallel. The probability that a single instance returns a subgraph is $\# H / (2m)^{\rho(H)}$ by \cref{lem:stream-sampler}. Let $x$ denote the fraction of invocations that returned a subgraph. Since all invocation are independent of each other, using Chernoff bound it follows that $\lvert \# H - (2m)^{\rho(H)} x \rvert \leq \epsilon \cdot \# H$ with probability at least $1-\frac{1}{n^{\Omega(1)}}$. Since each instance of the algorithm requires $O(\log n)$ space by \cref{lem:stream-sampler}, the total space bound is $O(k \log n) = O(m^{\rho(H)} \log^2(n) \, / \, (\epsilon^2 L))$.
\end{proof}

\newcommand{\subgraphAlgorithms}{
\begin{algorithm}
    \caption{Sampling a subgraph candidate from a stream}
    \tagged{io_version}{\label{alg:stream-subgraph}}
	\tagged{ts_version}{\label{alg:turnstile-subgraph}}
    \begin{algorithmic}[1]
    \Procedure{\textsc{StreamSubg}}{$C = (c_1, \ldots, c_\alpha), S = (s_1, \ldots, s_\beta)$}
        \Pass{1}{$C,S$}
			\ForAll{$c_i \in C$}
				\State sample $\lceil c_i / 2 \rceil + 1$ edges $C_i =$
				\StatexIndent[4] $\{u_{i,0}, v_{i,0}, \ldots, u_{i,\lceil c_i / 2 \rceil}, v_{i,\lceil c_i / 2 \rceil}\}$ \Comment{$f_1$}
			\EndFor
			\ForAll{$s_i \in S$}
				\State sample $k$ edges $S_i = \{x_{i,1}, y_{i,1}, \ldots, x_{i,k}, y_{i,k}\}$ \Comment{$f_1$}
			\EndFor
			\State count the number of edges $m$
		\EndPass
		\Pass{2}{$C,S,(C_i)_{i \in [\alpha]}, (S_i)_{i \in [\beta]}, m$}
			\ForAll{$c_i \in C$}
				\tagged{io_version}{
					\State sample $j \in [\sqrt{2m}]$ uniformly at random \label{lin:f3-a}
					\State $w_i \gets$ $j$\xth/ neighbor of $u_{i,1}$ \Comment{$f_3$} \label{lin:f3-b}
				}
				\tagged{ts_version}{
					\State $w_i \gets$ random neighbor of $u_{i,1}$ \Comment{$f_3$}
				}
				\State $C'_i = C_i \cup \{w_i\}$
			\EndFor
		\EndPass
		\Pass{3}{$C,S,(C'_i)_{i \in [\alpha]}, (S_i)_{i \in [\beta]}, m$}
			\ForAll{$z,z' \in (C'_i)_{i \in [\alpha]} \cup (S_i)_{i \in [\beta]}$}
				\State check whether $(z,z') \in E$ \Comment{$f_4$}
				\State count degree of $z$ \Comment{$f_2$}
			\EndFor
		\EndPass
		\State Let $V' := (C'_i)_{i \in [\alpha]} \cup (S_i)_{i \in [\beta]},$
		\StatexIndent[2] $E' := E \cap (V' \times V'), V'' := \emptyset$
\State
\State \Comment{Postprocessing}
        \ForAll{$c_i \in C'_i$}
			\If{$\dds{V'}[u_{i,1}] \leq \sqrt{2m}$} \tagged{io_version}{\label{lin:f3-repair}}
				\tagged{ts_version}{
					\State sample $t \in [\sqrt{2m}]$ uniformly at random
					\State check if $t \leq \dg{u_{i,1}}$
				}
				\State check if $(w_{i}, u_{i,1}, v_{i,1}, \ldots, u_{i,\lceil c_i/2 \rceil}, v_{i,\lceil c_i/2 \rceil})$
				\StatexIndent[4] is a canonical $c_i$-cycle in $(E', \prec_G)$
				\State $V'' = V'' \cup \{ w_{i}, u_{i,1}, \ldots, u_{i,c_i}, v_{i,1}, \ldots, v_{i,\alpha} \}$
			\Else
				\State sample $t \in [0,1]$ uniformly at random
				\State check if $t \leq \sqrt{2m} / \dg{u_{i,0}}$
				\State check if $(u_{i,0}, u_{i,1}, v_{i,1}, \ldots, u_{i,\lceil c_i/2 \rceil}, v_{i,\lceil c_i/2 \rceil})$
				\StatexIndent[4] is a canonical $c_i$-cycle in $(E', \prec_G)$
				\State $V'' = V'' \cup \{ u_{i,0}, u_{i,1}, \ldots, u_{i,c_i}, v_{i,1}, \ldots, v_{i,\alpha} \}$
			\EndIf
		\EndFor
		\ForAll{$s_i \in S_i$}
			\State\label{line:checkstart} check if $x_{i,1} = \ldots = x_{i,s_i}$ and $(x_{i,1}, y_{i,1}, \ldots, y_{i,s_i})$
			\StatexIndent[3] is a canonical $s_i$-star in $(E'', \prec_G)$
			\State $V'' = V'' \cup \{ x_{i,1}, y_{i,1}, \ldots, y_{i,s_i} \}$
		\EndFor
		\State\label{line:checkstop} check if $G[V'']$ contains a copy $H_G$ of $H$ as subgraph
		\State \Return $H_G$ if all checks were successful
	\EndProcedure
	\end{algorithmic}
\end{algorithm}
}
\usetag{io_version}
\subgraphAlgorithms

\subsection{Turnstile streaming algorithm}
\label{sec:subgraph-counting-ts}

In this section, we adapt the analysis of the insertion-only algorithm from \cref{sec:subgraph-counting-io} to the \emph{relaxed} augmented general graph model. In particular, we show that essentially the same algorithm yields \emph{approximately} uniformly randomly sampled subgraphs, which in turn is still sufficient to approximately count subgraphs. 

\begin{lemma}[{restate=[name=]turnstileSampler}]
	\label{lem:turnstile-sampler}
	Let $\epsilon > 0$, and let $H$ be an arbitrary subgraph of constant size. There exists an algorithm for the relaxed augmented general graph model that has space complexity $O(\log^4 n)$ and returns a copy of $H$ or nothing. For any fixed copy of $H$ in the input graph, it is returned with probability $(1 \pm \epsilon)/(2m)^{\rho(H)}$.
\end{lemma}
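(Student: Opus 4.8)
\emph{Approach.} The plan is to re-run the FGP algorithm of \cref{lem:query-sampler} (\cref{alg:stream-subgraph}) essentially verbatim, but to answer its queries with the \emph{relaxed} oracles $f_1,\dots,f_4$, making exactly one structural change to compensate for the fact that the relaxed $f_3$ no longer returns the $i$\xth/ neighbor of a vertex but only an approximately uniform random one. The FGP algorithm uses indexed-neighbor queries in only one place: when it closes a canonical odd cycle through a \emph{low-degree} vertex $u_1$ (i.e., $\dg{u_1}<\sqrt{2m}$), it draws $i$ uniformly from $[\sqrt{2m}]$ and takes the $i$\xth/ neighbor of $u_1$, so that each neighbor of $u_1$ is produced with probability exactly $1/\sqrt{2m}$ (and otherwise nothing is produced). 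I would replace this by: query an approximately uniform random neighbor $w$ of $u_1$ via the relaxed $f_3$, then keep $w$ with probability $\dg{u_1}/\sqrt{2m}$ and discard it otherwise (rejection sampling); this probability is well-defined because the branch is entered precisely when $\dg{u_1}<\sqrt{2m}$, and $\dg{u_1}$ is available exactly from $f_2$. The complementary high-degree branch of the cycle subroutine only ever reads an endpoint of a uniformly random edge, so there I would merely tolerate the weaker guarantee of $f_1$. All other queries (the $f_1$ edge samples used to build candidate cycles and stars, the $f_2$ degree queries, the $f_4$ adjacency checks) are used exactly as in the insertion-only algorithm, and the partition of queries into three rounds is the one already used in \cref{lem:stream-sampler}.

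\emph{Output probability.} Fix a copy $H_0$ of $H$. Following the proof of \cref{lem:query-sampler}: once $H$ is decomposed into odd cycles and stars via \cref{lem:subgraph-decomposition}, the event that the algorithm outputs $H_0$ factors as a product of $O(|E(H)|)$ mutually independent elementary events of two kinds: (i) a designated $f_1$ query returns a designated edge, which the algorithm then orients correctly via an internal fair coin; (ii) the closing step of a cycle (the neighbor-plus-rejection step, or its high-degree analogue) returns a designated vertex. In the exact model each type-(i) event has probability $1/(2m)$ and each type-(ii) event has probability $1/\sqrt{2m}$, multiplying to $1/(2m)^{\rho(H)}$. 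In the relaxed model a type-(i) event has probability $\tfrac{1}{2m}\pm\tfrac{1}{2n^c}=\tfrac{1}{2m}(1\pm\tfrac{m}{n^c})$, and a type-(ii) event has probability $(\tfrac{1}{\dg{u_1}}\pm\tfrac{1}{n^c})\cdot\tfrac{\dg{u_1}}{\sqrt{2m}}=\tfrac{1}{\sqrt{2m}}(1\pm\tfrac{\dg{u_1}}{n^c})$ with $\dg{u_1}\le\sqrt{2m}<n$ (symmetrically for the high-degree analogue, where $\dg{\cdot}\ge\sqrt{2m}$). Since $m<n^2$, each of the $O(|E(H)|)$ factors agrees with its exact-model value up to a multiplicative $(1\pm n^{2-c})$, so their product equals $\tfrac{1}{(2m)^{\rho(H)}}(1\pm O(|E(H)|\cdot n^{2-c}))$; moreover each relaxed query can also fail outright with probability at most $1/n^c$, which only lowers the return probability by a further $O(|E(H)|/n^c)$. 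Taking the constant $c$ in the relaxed model large enough as a function of $\epsilon$ and $|H|$ (a constant whenever $\epsilon$ is constant) then yields return probability $(1\pm\epsilon)/(2m)^{\rho(H)}$, as claimed.

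\emph{Space, and the main obstacle.} For space, the algorithm holds only $O(1)$ vertex identifiers and $O(1)$ query answers in memory, plus $O(\log n)$ bits of fresh randomness for the rejection step, so it runs in $O(\log n)\le O(\log^4 n)$ bits. I expect the only genuine difficulty to be the bookkeeping in the second part: one must check that the rejection-sampling replacement truly reproduces the marginal ``each neighbor of $u_1$ with probability $\approx 1/\sqrt{2m}$'' on which the FGP analysis relies, and that the constant number of additive $\pm 1/n^c$ perturbations together with the query-failure probabilities compound into a single \emph{multiplicative} $(1\pm\epsilon)$ error on $1/(2m)^{\rho(H)}$ rather than into an additive one; this is exactly where the freedom to pick $c$ large relative to $\log(1/\epsilon)$ and $|H|$ enters. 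The round structure and the claimed space are otherwise inherited unchanged from \cref{lem:stream-sampler}.
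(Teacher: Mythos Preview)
Your proposal is correct and follows essentially the same approach as the paper: you make the identical modification of replacing the indexed-neighbor query by a relaxed $f_3$ call followed by a rejection step with probability $\dg{u_1}/\sqrt{2m}$, and your error analysis (bounding each elementary event up to a multiplicative $(1\pm n^{2-c})$ and taking $c$ large enough) is equivalent to the paper's additive bounds $1/(2m)^{\rho(H)}\pm 2^{|H|}n/n^c$ with the choice $c=5|H|$. The only cosmetic difference is that you phrase the perturbations multiplicatively from the outset, which is arguably cleaner.
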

\newcommand{\turnstileSamplerProof}{
\begin{proof}
	Let $C = (c_1, \ldots, c_\alpha), S = (s_1, \ldots, s_\beta)$ be a decomposition of $H$ into odd-length cycles and stars that satisfies the guarantees of \cref{lem:subgraph-decomposition} and that is passed to \cref{alg:stream-subgraph}. We slightly modify \cref{alg:stream-subgraph} as follows. In \cref{alg:stream-subgraph}, we replace \cref{lin:f3-a,lin:f3-b} by setting $w_i$ to the answer of a query $f_3(u_{i,1})$, i.e., an approximately uniformly random neighbor of $u_{i,1}$. %
	After \cref{lin:f3-repair}, we sample a uniformly random number $t$ from $[\sqrt{2m}]$ and check whether it is at most $\dg{u_{i,1}}$. Since no additional queries are asked, the space complexity follows from \cref{lem:query-sampler,lem:reduction-ts}.
	\tagged{mainpart}{Pseudo code and the proof of correctness are provided in \cref{sec:app-turnstile}.}%
	\tagged{appendix}{
	The modified algorithm is provied in \cref{alg:turnstile-subgraph}.

	Let $i \in [\beta]$ and set $k := s_i$. Let $(u_0, \ldots, u_k)$ be a \emph{canonical} $k$-star in $G$, and let $p$ be the probability that it is sampled by queries of type $f_1$. Then, we have $(1/(2m) - 1/n^c)^k \leq p \leq (1/(2m) + 1/n^c)^k$.

	Let $i \in [\alpha]$ and set $k := (c_i-1)/2$. Let $(u_1, \ldots, u_{2k+1})$ be a \emph{canonical} odd-length cycle in $G$. Let $p$ be the probability that $((u_1,u_2), (u_3,u_4), \ldots, (u_{2k-1}, u_{2k}) )$ are sampled via queries of type~$f_1$. Then, we have $(1/(2m) - 1/n^c)^{k} \leq p \leq (1/(2m) + 1/n^c)^{k}$. Let $q$ be the probability that $u' := u_{2k+1}$ is sampled. We distinguish two cases: $\dg{u_1} \leq \sqrt{2m}$ and $\dg{u_1} > \sqrt{2m}$. If $\dg{u_1} \leq \sqrt{2m}$, then $u' = w_i$ is sampled according to the modification described above via a query of type $f_3$ and 
	\begin{equation*}
		\frac{\dg{u_1}}{\sqrt{2m}} \cdot \left( \frac{1}{\dg{u_1}} - \frac{1}{n^c} \right)
		\leq q
		\leq \frac{\dg{u_1}}{\sqrt{2m}} \cdot \left( \frac{1}{\dg{u_1}} + \frac{1}{n^c} \right) .
	\end{equation*}
	Now, consider the case $\dg{u'} > \sqrt{2m}$. Then, $u' = u_{i,0}$ is sampled via a query of type $f_1$. Note that sampling a vertex \emph{exactly} proportional to its degree is equivalent to sampling an edge \emph{exactly} uniformly at random and choosing one of its endpoints by flipping a fair coin. Since $f_1$ returns exactly one edge, for any $e,e' \in E$, $e \neq e'$, the events of sampling $e$ and sampling $e'$ are disjoint. For the corresponding query of type $f_1$ in the \emph{relaxed} augmented general graph model it follows that
	\begin{equation*}
		\frac{\sqrt{2m}}{\dg{u'}} \cdot \left( \frac{\dg{u'}}{2m} - \frac{\dg{u'}}{n^c} \right)
		\leq q
		\leq \frac{\sqrt{2m}}{\dg{u'}} \cdot \left( \frac{\dg{u'}}{2m} + \frac{\dg{u'}}{n^c} \right) .
	\end{equation*}

	Fix a copy of $H$ and let $p$ be the probability that $H$ is sampled. It follows from the discussion above that
	\begin{equation*}
		\frac{1}{(2m)^{\rho(H)}} - \frac{2^{\lvert H \rvert}n}{n^c}
		\leq p
		\leq \frac{1}{(2m)^{\rho(H)}} + \frac{2^{\lvert H \rvert}n}{n^c}
	\end{equation*}
	Choosing $c = 5 \lvert H \rvert \geq \log(2^{\lvert H \rvert} n (2m)^{\rho(H)} \, / \, \epsilon) \, / \, \log(n)$ concludes the proof.
	}
\end{proof}
}
\turnstileSamplerProof

We can obtain the subgraph counting algorithm in the turnstile streaming model similarly as we prove  \cref{lem:io-counting} by considering the sampler as a biased coin and estimating its heads probability. Now we are ready to prove Theorem \ref{lem:ts-counting}.

\begin{proof}[Proof of Theorem \ref{lem:ts-counting}]
Consider the algorithm from \cref{lem:turnstile-sampler} (i.e., \cref{alg:turnstile-subgraph}) with $\varepsilon=\frac{\epsilon}{3}$. Let $p$ be the probability that some copy of subgraph $H$ is returned. By \cref{lem:turnstile-sampler}, $p=\# H (1 \pm \frac{\epsilon}{3}) / (2m)^{\rho(H)}$. We can think of the above algorithm as tossing a coin with bias $p$. By the Chernoff bound, with high probability, the bias $p$ can be estimated up to a multiplicative factor $(1\pm \frac{\epsilon}{3})$ in $O(\log n/\varepsilon^2p)=\tilde{O}(m^{\rho(H)} / (\epsilon^2 \# H))$ tosses, which can be implemented by running in parallel the same number of copies of \cref{alg:turnstile-subgraph}. Given the estimate $p$, $\#H$ can be approximated within a multiplicative factor $(1\pm \epsilon)$. Since each instance of the algorithm requires $O(\log^4 n)$ space by \cref{lem:turnstile-sampler}, the total space bound is $\tilde{O}(m^{\rho(H)} / (\epsilon^2 \# H))\cdot O( \log^4 n) = \tilde{O}(m^{\rho(H)} / (\epsilon^2 \# H))$. This finishes the proof of the theorem.
\end{proof}

\let\saveFloatBarrier\FloatBarrier%
\let\FloatBarrier\relax%
\section{Low-degeneracy clique counting}
\let\FloatBarrier\saveFloatBarrier%
\subsection{A sublinear-time algorithm for counting cliques in a graph with low degeneracy}
Now we describe the sublinear-time algorithm in the general graph model (i.e., the augmented general graph model without edge sampling queries) for approximating $\# K_r$ of a graph with degeneracy at most $\lambda$, for any $r\ge 3$ and $\lambda >0$. The algorithm is given by Eden, Ron and Seshadhri \cite{eden2020faster}. In the following, we call the algorithm in \cite{eden2020faster} the \emph{ERS algorithm}.

\paragraph{High-level description of the ERS algorithm.} 
For the sake of a concise presentation, we will assume that the algorithm is given $\#K_r$. 
To obtain an estimate when only a lower bound $L$ on $\# K_r$ is available, it is straightforward to use (parallel) geometric search for values greater than $L$ (see \cref{thm:fromerstheorem} in \cref{app:algo_degeneracy}).
The ERS algorithm makes use of a notion of \emph{ordered} cliques. More precisely, for any $t\in \{2,\dots,r\}$, an \emph{ordered $t$-clique} $\vec{T}=(v_1,\cdots,v_t)$ is a tuple of $t$ vertices such that $T=\{v_1,\cdots,v_t\}$ forms a $t$-clique (and $T$ is called an \emph{unordered $t$-clique}).

Let $s_1,\dots, s_r$ be some parameters. The ERS algorithm iteratively does the following: in iteration $1$, it samples a set $\RR_1$ of $s_1$ ordered vertices; in iteration $2$, it samples a set $\RR_2$ of $s_2$ ordered edges (incident to the sampled vertices); and then in iteration $t>2$, it samples a set $\RR_t$ of $s_t$ ordered $t$-cliques, based on the set of $(t-1)$-cliques from the previous iteration. Concretely, the sample set $\RR_{t}$ is obtained by repeating the following $s_t$ times:
\begin{enumerate}
    \item sample an ordered clique $\vec{T}$ from $\RR_{t-1}$ with probability proportional to $\frac{\dg{\vec{T}}}{\dg{\RR_{t-1}}}$, where $\dg{\vec{T}}$ is the degree of the \emph{minimum-degree} vertex in $\vec{T}$ and for a set of order cliques $\RR$, $\dg{\RR}:=\sum_{T\in \RR}\dg{\vec{T}}$;
	\item select a uniformly random neighbor $w$ of the least degree vertex in $\vec{T}$; and
    \item check if $\vec{T}$ and $w$ forms a $t$-clique, and if so, add it to $\mathcal{R}_t$.
\end{enumerate}

Once we have these sampled sets $\RR_1,\cdots,\RR_r$, we can use their \emph{weights} to approximate $\#K_r$. Roughly speaking, for each ordered $t$-clique $\vec{T}$, its \emph{weight} $\ww(T)$ is a number that is close to the number of $r$-cliques that are assigned to $\vec{T}$ according to an \emph{assignment rule} specified below. Throughout the process, the ERS algorithm carefully chooses the parameters so that $\ww(\RR_1)$ is close to $\frac{\# K_r}{n}\cdot s_1$, and for any $t\geq 1$, $\ww(\RR_{t+1})$ is close to $\frac{\ww(\RR_t)}{\dg{\RR_t}}\cdot s_{t+1}$. Thus, it suffices to compute $\ww(\RR_r)$ for estimating $\#K_r$, as for any $t\geq 2$, $\ww(\RR_t)$ is close to $\frac{\# K_r}{n}\cdot\frac{s_1\cdot \dots\cdot s_t}{\dg{\RR_1}\cdot\dots\cdot \dg{\RR_{t-1}}}$.%

To guarantee the above, the ERS algorithm defines an \emph{assignment rule} to assign each $r$-clique $C$ to an ordered $t$-clique $\vec{T}$, for any $t\leq r$. To check if an ordered $r$-clique $\emph{C}$ is assigned the corresponding unordered clique, the algorithm 
\begin{enumerate}
    \item considers, for every $t\in  \{2,\dots,r\}$, all the prefixes $\vec{C}_{\leq t}$, where a prefix $\vec{C}_{\leq t}$ of $\vec{C}$ is the ordered $t$-clique whose vertices are the first $t$ vertices in $\vec{C}$. For each prefix $\vec{C}_{\leq t}$, it invokes a subroutine \textsc{IsActive} to check if it is \emph{active}, %
    which in turn iteratively samples sets $\RR_{t+1},\cdots,\RR_{r}$, starting from $\RR_t=\{\vec{C}_{\leq t}\}$ and decides the activeness by the statistics of these sample sets;
 	\item if $\vec{C}$ is the lexicographically smallest active ordered  $r$-clique among all active ordered $r$-cliques induced by $C$, then it returns $1$ (indicating that $\emph{C}$ is assigned the corresponding unordered clique); otherwise, it returns $0$. 
\end{enumerate}

For the sake of completeness, we provide pseudo code of this algorithm (i.e., \cref{clique-estimation} \textsc{CountClique}) in \cref{app:algo_degeneracy}. 

\paragraph{Simplifying the ERS algorithm in the augmented graph model.}  The above ERS algorithm in \cite{eden2020faster} was described in the general graph model, in which the algorithm can not perform edge sampling queries. The authors of \cite{eden2020faster} need to carefully select $s_1$ so that it can handle different cases of $\# K_r$ and remedy the defect of not being able to sample uniform edges. The choice of $s_1$ causes an additional term $\min\{\frac{n\lambda^{r-1}}{\#K_r}, \frac{n}{(\# K_r)^{1/r}}\}$ in the query complexity $\min \left\{ \frac{n\lambda^{r-1}}{\# K_r}, \frac{n}{(\# K_r)^{1/r}} + \frac{m\lambda^{r-2}}{\# K_r} \right\} \cdot \mathrm{poly}(\log n, 1/\epsilon, r^r)$ of their algorithm (see Theorem 1.1 in \cite{eden2020faster} and also \cref{thm:fromerstheorem} in \cref{app:algo_degeneracy}).

We note that this algorithm can be simplified in the augmented general graph model, which can be further transformed to the streaming setting by \cref{lem:reduction-ts}. More precisely, we note that in the augmented graph model, one can directly start with sampling a set $\RR_2$ of $s_2$ edges independently and uniformly at random, and then iteratively sample a set of $t$-cliques $\RR_t$ based on $\RR_{t-1}$, for any $3\leq t\leq r$. That is, there is no need to sample a set $\RR_1$ of vertices (and we simply set $\RR_1:=E(G)$ and $\dg{\RR_1}:=m$ at the beginning of the algorithm). Then we choose parameters to ensure that $\ww(\RR_2)$ is close $\frac{\#K_r}{m}\cdot s_2$, where $\ww(\RR_2)$ is the total weight of ordered edges in $\RR_2$ and the weight of an ordered edge is the number of $r$-cliques assigned to it. Then one can still guarantee that for any $t\geq 3$, $\ww(\RR_t)$ is close to $\frac{\# K_r}{m}\cdot\frac{s_2\cdot \dots\cdot s_t}{\dg{\RR_2}\cdot\dots\cdot \dg{\RR_{t-1}}}$, by setting the parameters $s_2,\cdots,s_r, \vec{\tau}$ similarly as in \cite{eden2020faster} (while we start with a slightly different choice $s_2$ due to the uniform edge sampling). By the analysis in \cite{eden2020faster}, we have the following lemma regarding the performance guarantee of the ERS algorithm in the augmented graph model. 
\begin{lemma}[\cite{eden2020faster}]
	\label{lem:ers-algo}
	Given query access to a graph with degeneracy $\lambda$ in the augmented general graph model, the ERS algorithm has expected running time and query complexity
	\begin{equation*}
	\frac{m\lambda^{r-2}}{\# K_r}  \cdot \mathrm{poly}(\log n, 1/\epsilon, r^r)
	\end{equation*}
	and outputs a $(1+\epsilon)$-approximation to $\# K_r$ with high probability. Additionally, the maximum query complexity is $O(m+n)$.
\end{lemma}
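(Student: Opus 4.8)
The plan is to follow the analysis of the ERS algorithm of Eden, Ron and Seshadhri \cite{eden2020faster} essentially verbatim, tracking the single structural change we have introduced: in the augmented general graph model we replace the initial vertex-sampling iteration by a direct uniform edge sample. Concretely, I would set $\RR_1 := E(G)$ and $\dg{\RR_1} := m$, and take $\RR_2$ to be a multiset of $s_2$ edges drawn independently and uniformly at random via the query type $f_1$. For every $t \geq 3$, the sampling of $\RR_t$ from $\RR_{t-1}$ (pick $\vec{T} \in \RR_{t-1}$ with probability $\dg{\vec{T}}/\dg{\RR_{t-1}}$, pick a random neighbor $w$ of the least-degree vertex, keep $\vec T \cup \{w\}$ if it is an ordered $t$-clique), the weight function $\ww(\cdot)$, the assignment rule, and the \textsc{IsActive} subroutine are kept exactly as in \cite{eden2020faster}. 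The final estimator is $\widehat{\# K_r} := \frac{m}{s_2 \cdots s_r}\cdot \dg{\RR_2}\cdots\dg{\RR_{r-1}}\cdot \ww(\RR_r)$.

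The first substantive step is to establish the base-level weight invariant in the new form: with the choice of $s_2$ (the analogue of the choice of $s_1$ in \cite{eden2020faster}, adjusted for the fact that we now sample from all $m$ edges rather than from edges incident to a sampled vertex set), $\ww(\RR_2)$ concentrates around $\frac{\# K_r}{m}\cdot s_2$. Since each edge of $G$ carries a deterministic weight equal to the number of $r$-cliques the assignment rule assigns to it, and these weights sum to $\# K_r$ over all edges, $\ww(\RR_2)$ is a sum of $s_2$ i.i.d.\ bounded variables and the concentration follows from a Chernoff/Chebyshev bound precisely as the level-$1$ invariant in \cite{eden2020faster}, the point to verify being that a single edge's weight is bounded by a $\mathrm{poly}(r,\lambda)$-controlled quantity. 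Then I would port the inductive step unchanged: for every $t \geq 2$, conditioned on $\RR_t$, we have $\mathbf{E}[\ww(\RR_{t+1}) \mid \RR_t] = \frac{\ww(\RR_t)}{\dg{\RR_t}}\cdot s_{t+1}$ by the definition of the weighted sampling, and the concentration around this expectation — together with the bounds on $\dg{\RR_t}$ and on the active structure of prefixes used by \textsc{IsActive} — is exactly the argument of \cite{eden2020faster} with $\RR_1 \mapsto \RR_2$ and $n \mapsto m$. A union bound over the $O(r)$ levels and the polynomially many \textsc{IsActive} calls shows $\ww(\RR_r)$ is within a $(1\pm \epsilon/\mathrm{poly}(r))$ factor of $\frac{\# K_r}{m}\cdot\frac{s_2\cdots s_r}{\dg{\RR_2}\cdots\dg{\RR_{r-1}}}$, so $\widehat{\# K_r}$ is a $(1+\epsilon)$-approximation of $\# K_r$ with high probability.

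For the complexity bounds, each of the $s_t$ trials at level $t$ costs one random-neighbor query $f_3$, $O(r^2)$ adjacency queries $f_4$ to test whether a $t$-clique is formed, and $O(r)$ degree queries $f_2$; sampling within $\RR_{t-1}$ proportionally to weight is internal and makes no queries on the input. Hence the total expected query count (including the identical accounting inside \textsc{IsActive}) is $O(r^2)\sum_{t=2}^{r}\mathbf{E}[s_t]$, and substituting the parameter settings — those of \cite{eden2020faster} except that removing $\RR_1$ eliminates the term $\min\{\tfrac{n\lambda^{r-1}}{\# K_r}, \tfrac{n}{(\# K_r)^{1/r}}\}$ — gives $\frac{m\lambda^{r-2}}{\# K_r}\cdot\mathrm{poly}(\log n, 1/\epsilon, r^r)$. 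Capping every $s_t$ at $\Theta(m+n)$, and noting that if any cap is ever reached the algorithm may instead read the whole graph and compute $\# K_r$ exactly, yields the worst-case bound $O(m+n)$; since $f_3$ and $f_4$ are each $O(1)$ time per query and weighted sampling costs $O(|\RR_{t-1}|)$, the running time matches the query complexity up to the stated polynomial factors.

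The hard part will be the base-level step: checking that the invariant $\ww(\RR_2)\approx \frac{\# K_r}{m}s_2$ holds with the required failure probability while, simultaneously, the modified $s_2$ is small enough that it does not inflate the query complexity (and, in particular, that the term it contributes is dominated by $\frac{m\lambda^{r-2}}{\# K_r}\cdot\mathrm{poly}(\cdot)$). This is the only place where the edge-sampling modification genuinely interacts with the original argument; every level $t \geq 3$, the assignment rule, the activeness analysis, and the variance bounds are formally identical to \cite{eden2020faster}, so the remaining work is bookkeeping to confirm that the substitutions $\RR_1 \mapsto \RR_2$ and $n \mapsto m$ are consistent throughout.
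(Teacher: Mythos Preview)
Your proposal is correct and mirrors exactly what the paper does: the paper does not give an independent proof of this lemma but simply observes (in the paragraph preceding it) that in the augmented model one may set $\RR_1 := E(G)$, sample $\RR_2$ directly via uniform edge queries, adjust $s_2$ accordingly so that $\ww(\RR_2)\approx \tfrac{\#K_r}{m}s_2$, and then invoke the analysis of \cite{eden2020faster} verbatim for all levels $t\ge 3$; the removal of the vertex-sampling level is precisely what eliminates the $\min\{\tfrac{n\lambda^{r-1}}{\#K_r},\tfrac{n}{(\#K_r)^{1/r}}\}$ term. The only minor discrepancy is your justification of the $O(m+n)$ worst-case bound via an explicit cap-and-recompute fallback, whereas the algorithm as stated enforces it through the abort thresholds on $s_{t+1}$ already present in the ERS pseudocode; both yield the same bound.
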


\subsection{The round-adaptivity of the ERS algorithm}
\label{sec:streaming-degeneracy}
Now we show that the ERS algorithm in the augmented general graph model is an $O(r)$-round algorithm. We describe our streaming version of the ERS algorithm, based on the discussion before. The first observation is that the algorithm consists of two sequentially aligned blocks: sampling the sets $(\RR_i)_{i \in \{2,\dots,r\}}$ and checking the assignments. In both blocks, $(i+1)$-cliques are iteratively constructed from $i$-cliques. These constructions are also inherently sequential. However, constructing multiple $i$-cliques and some other computations can be done in parallel. Details follow below.

We state the pseudo code of the $5r$-pass streaming version of the ERS in \cref{alg:stream-approx,alg:stream-sampleset}. To obtain the final result, we use probability amplification and return the median from running sufficiently many and accordingly parameterized instances of \cref{alg:stream-approx}, which is described in \cref{stream-clique-estimation}.

\paragraph{Construction of $(\RR_t)_{t \in \{2,\dots,r\}}$}
The algorithm \textsc{StreamApproxClique} (\cref{alg:stream-approx}) constructs sets of ordered $t$-cliques $\RR_t$ iteratively, for $t = 2, \ldots, r$. Given a set $\RR_t$, it calls a $2$-pass procedure \textsc{StreamSet} to construct $\RR_{t+1}$. After the construction of $\RR_r$, it checks how many of the sampled ordered cliques in $\RR_t$ are cliques that are assigned their respective unordered clique using a $2r$-pass procedure \textsc{StrIsAssigned} (see \cref{sec:app-missing}), and it outputs this number scaled accordingly as its estimate of $\# K_r$.

\paragraph{Sampling $\RR_{t+1}$.}
Given $\RR_t$, the procedure \textsc{StreamSet} (\cref{alg:stream-sampleset}) samples up to $s_{t+1}$ many ordered $(t+1)$-cliques to include into $\RR_{t+1}$. To sample \emph{one} ordered $(t+1)$-clique, the algorithm samples an ordered $t$-clique $\vec{T}$ from $\RR_{t}$ proportionally to $\dg{\vec{T}}$. We note that the algorithm can maintain a data structure $\dds{\RR_{t}}$ for every $t \in \{2,\dots,r\}$ so that this sampling can be done offline without a pass on the input. To select a uniformly random neighbor of the smallest-degree vertex $u$ of $\vec{T}$, the algorithm samples $i \in [\dg{u}]$ and queries the $i$\xth/ neighbor of $u$ in a single pass. In another pass, the algorithm checks whether $(\vec{T}, w)$ is a $(t+1)$-clique and adds it to $\RR_{t+1}$ if this is the case. Sampling $s_{t+1}$ ordered $(t+1)$-cliques like this can be parallelized.

\paragraph{Checking the assignments of $\RR_r$.}
Given $\vec{C} \in \RR_r$, for an ordered $r$-clique $\vec{C}'$ that is isomorphic to $\vec{C}$ and a prefix $\vec{C}'_{\leq t}$, where $t \in  \{2,\dots,r\}$, \textsc{StrIsAssigned} (\cref{alg:stream-assigned} in \cref{sec:app-missing}) calls \textsc{StrAct} (\cref{alg:stream-active} in \cref{sec:app-missing}). The latter is similar to a ``warm-start'' of (multiple instances of) \textsc{StreamApprox} with $\RR_t = \{ \vec{C}'_{\leq t} \}$: it uses $2(r-t)$ passes to iteratively construct sets $\RR_{t+1}, \ldots, \RR_r$ via \textsc{StreamSet}. Then, it returns whether sufficiently many instances of \textsc{StreamApprox} satisfy a threshold condition that can be computed offline.

Since the parameters of a call to \textsc{StrAct} do not depend on calls for other (shorter) prefixes of $\vec{C}'$, all calls corresponding to different prefixes can be parallelized. Note that this differs from the iterative construction of $(\RR_t)_{t \in  \{2,\dots,r\}}$ in \textsc{StreamApprox}. In addition, the calls corresponding to different ordered $r$-cliques can be parallelized. Once the algorithm has determined, for all ordered $r$-cliques and their prefixes, whether they are active, the fully active ordered $r$-cliques can be compared lexicographically against the sampled $r$-clique $\vec{C}$. If $\vec{C}$ is active and lexicographically smallest, the algorithm accepts, otherwise it rejects. It follows that due to parallel computation, we require as many passes as the most costly call to \textsc{StrAct}, which is at most $2r$ passes.

\begin{theorem}
	The ERS algorithm in the augmented graph model can be implemented as a $5r$-round adaptive algorithm.
\end{theorem}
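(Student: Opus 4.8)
The plan is to give an explicit assignment of the queries issued by the (simplified) ERS algorithm in the augmented general graph model --- the version with $\RR_1 := E(G)$, $\dg{\RR_1} := m$ that starts by sampling $s_2$ uniform edges --- into at most $5r$ rounds, so that every query asked in round $\ell$ depends only on the algorithm's coins and on the answers received in rounds $1, \dots, \ell-1$; \cref{lem:reduction-ts} then immediately turns this into the $5r$-pass streaming algorithm of \cref{alg:stream-approx,alg:stream-sampleset}. The algorithm splits into two blocks run one after the other: the iterative construction of $\RR_2, \dots, \RR_r$, and the assignment check on the ordered cliques of $\RR_r$ via \textsc{StrIsAssigned}. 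I will bound the number of rounds of each block by $O(r)$ and add them. The structural fact I rely on is that if the algorithm maintains, for every $t$, the dictionary $\dds{\RR_t}$ storing for each $\vec T \in \RR_t$ its minimum-degree vertex and that vertex's degree, together with the subgraphs induced on the (constant-size) vertex sets of the sampled cliques, then drawing $\vec T \in \RR_{t-1}$ with probability $\dg{\vec T}/\dg{\RR_{t-1}}$, drawing $i \in [\dg{\vec T}]$, computing the weights $\ww(\RR_t)$, evaluating the activeness thresholds $\vec\tau$, and performing the lexicographic-minimality test are all done offline and contribute no rounds; the only queries are uniform edges ($f_1$), degrees of freshly seen vertices ($f_2$), random neighbors ($f_3$), and clique-membership tests ($f_4$).

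For the first block I would use round $1$ to sample the $s_2$ uniform edges forming $\RR_2$ (queries $f_1$) and round $2$ to query the degrees of all endpoints of $\RR_2$ (queries $f_2$), which completes $\dds{\RR_2}$. Then for $t = 2, \dots, r-1$ the two-round subroutine \textsc{StreamSet} (\cref{alg:stream-sampleset}) builds $\RR_{t+1}$ from $\RR_t$: first a round of $f_3$-queries asking, for each of the up to $s_{t+1}$ cliques $\vec T$ sampled offline from $\RR_t$, the $i$\xth/ neighbor $w$ of the minimum-degree vertex of $\vec T$ (with $i \in [\dg{\vec T}]$ read off $\dds{\RR_t}$), then a round of $f_4$-queries checking whether each candidate $(\vec T, w)$ spans a $(t+1)$-clique, carrying along the $f_2$-queries $\dg{w}$ needed to extend $\dds{\RR_t}$ to $\dds{\RR_{t+1}}$. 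Since $\RR_{t+1}$ is sampled from $\RR_t$, this part is inherently sequential and uses $2 + 2(r-2) = 2r-2$ rounds.

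For the second block I would run, in parallel, a warm-started copy of the construction for every triple $(\vec C, \vec C', t)$ where $\vec C \in \RR_r$, $\vec C'$ is an ordering of the vertex set of $\vec C$, and $t \in \{2, \dots, r\}$: \textsc{StrAct} on the prefix $\vec C'_{\le t}$ rebuilds $\RR_{t+1}, \dots, \RR_r$ from the singleton $\{\vec C'_{\le t}\}$ with $r-t$ calls to \textsc{StreamSet}, using $2(r-t) \le 2(r-2)$ rounds, and no setup round is needed because the degrees of all vertices of $\vec C$ were already obtained while $\RR_r$ was built in the first block. The point is that these calls are mutually non-adaptive --- the parameters and internal randomness of each \textsc{StrAct} call, and of each of the $s_{t+1}$ candidate cliques inside one \textsc{StreamSet} call, are fixed in advance and do not depend on any other call's answers --- so the round count of the block is the maximum over all calls, not their sum, hence at most $2r$; the closing lexicographic comparisons (which induced $r$-cliques are fully active and lexicographically smallest) and the final rescaling of the count of assigned cliques into an estimate of $\#K_r$ are offline. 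Summing, the whole algorithm uses $O(r)$ rounds, and a careful count of the pseudo code of \cref{alg:stream-approx,alg:stream-sampleset} (allowing a small constant of slack per level for batched degree queries and the parallel geometric search over lower bounds on $\#K_r$) gives at most $5r$.

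The part that will need the most care is the bookkeeping behind each ``done offline'': I must check that what survives the two query rounds of a \textsc{StreamSet} call --- precisely $\dds{\RR_{t+1}}$ and the induced subgraphs on the $O(t)$-vertex sets of the candidate cliques --- is exactly the information the ERS algorithm consults when it computes sampling probabilities, weights, activeness, and the assignment rule, so that the regrouped algorithm has the same output distribution as the original one of \cref{lem:ers-algo}; and I must confirm that the iterative dependence $\RR_t \to \RR_{t+1}$ is the \emph{only} adaptivity present, i.e.\ no step asks a query whose argument depends on an answer obtained in the same round. A secondary, routine point is that replacing the general-graph base case by $s_2$ uniform edges changes only the choice of $s_2$ and leaves the accuracy analysis of \cite{eden2020faster} intact.
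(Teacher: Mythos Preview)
Your proposal is correct and follows essentially the same approach as the paper: both split the ERS algorithm into the sequential construction of $\RR_2,\dots,\RR_r$ (one initial edge-sampling round, one degree round, then two rounds per call to \textsc{StreamSet}) followed by the parallelized assignment check via \textsc{StrAct}, with the same observation that sampling proportional to $\dg{\vec T}$, threshold tests, and the lexicographic comparison are offline. Your remark that the setup degree round in the second block can be absorbed because the degrees of all vertices of each $\vec C\in\RR_r$ were already obtained during the first block is a nice sharpening; the paper's pseudocode includes that extra pass in \textsc{StrAct} and also spends a separate pass to count $m$, which accounts for the small slack between your $(2r-2)+2(r-2)$ count and the stated $5r$.
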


Applying \cref{lem:reduction-io} proves \cref{thm:clique}.

\begin{algorithm}
	\caption{Approximately counting the number of $K_r$}
	\label{stream-clique-estimation}
	\begin{algorithmic}[1]  
		\Procedure{StreamCountClique}{$n, r,\lambda,\epsilon$}
		\State $\gamma \gets \epsilon/(8r\cdot r!)$, $\beta\gets 1/(6r)$
		\State $\tau_r\gets 1$; for each $t\in [2,r-1]$, set $\tau_t\gets\frac{r^{4r}}{\beta^r\cdot \gamma^2}\cdot \lambda^{r-t}$; \StatexIndent[1] $\vec{\tau}\gets\{\tau_2,\dots,\tau_r\}$
		\ForAll{$j=1,\ldots, q=\Theta(\log (n))$}
		\State Invoke \textsc{StreamApproxCliques}($n,r,\lambda,\epsilon,  \vec{\tau}$).%
		\State Let $\chi_j$ be the returned value. 
		\EndFor
		\State Let $\hat{n}_r$ be the the median value of $\chi_1,\cdots, \chi_q$ \State \Return $ \hat{n}_r$.
		\EndProcedure
	\end{algorithmic}
\end{algorithm}

\begin{algorithm}
	\caption{The basic subroutine}
	\label{alg:stream-approx}
	\begin{algorithmic}[1]  
		\Procedure{StreamApproxClique}{$n,r,\lambda,\epsilon, \vec{\tau}$}
		\State set $\tilde{\ww}_1 \gets(1-\epsilon/2)\# K_r$, $\beta\gets 1/(18r)$, $\gamma\gets \epsilon/(2r)$
		\State virtually set $\RR_1\gets E$
		\Pass{1}{}
			\State count number of edges $m$ and set $\dg{\RR_1}\gets m$
		\EndPass
		\Pass{2}{m}
			\State sample $s_2 \gets \lceil \frac{m\cdot\tau_2}{\tilde{\ww}_0}\cdot \frac{3\ln(2/\beta)}{\gamma^2}\rceil$ edges u.a.r
			\StatexIndent[3] and let $\RR_2$ be the chosen multiset
			\Comment{$f_1$}		
		\EndPass
		\Pass{3}{$\RR_2$}
			\State construct $\dds{\RR_2}$ \Comment{$f_2$}
		\EndPass
		\ForAll{$t \in \{2,\dots,r-1\}$}
			\State $\dg{\RR_t} \gets \sum_{\vec{T} \in \RR_t} \dg{\vec{T}} = \sum_{\vec{T} \in \RR_t} \min_{v \in \vec{T}} \dds{\RR_t}[v]$
			\State $\tilde{\ww}_t \gets (1-\gamma)\frac{\tilde{\ww}_{t-1}}{\dg{\RR_{t-1}}}\cdot s_t$
			\StatexIndent[3] and $s_{t+1}\gets \lceil \frac{\dg{\RR_t}\tau_{t+1}}{\tilde{\ww}_t}\cdot \frac{3\ln(2/\beta)}{\gamma^2}\rceil$
			\State \textbf{if} $s_{t+1}>\frac{4m\lambda^{t-1}\cdot \tau_{t+1}}{\#K_r}\cdot \frac{(r!)^2\cdot 3\ln (2/\beta)}{\beta^t\cdot \gamma^2}$, \textbf{then} \textbf{abort}
			\Passes{2t}{2t+1}{$t,\RR_t,\dds{\RR_t},s_{t+1}$}
				\State $\RR_{t+1}, \dds{\RR_{t+1}} \gets \textsc{StreamSet}(t,\RR_t,\dds{\RR_t},s_{t+1})$
			\EndPasses
		\EndFor
		\Passes{2r}{4r+1}{}
			\State let $\hat{n}_r=\frac{m\cdot \dg{\RR_2}\cdot\cdots \cdot \dg{\RR_{r-1}}}{s_1\cdot \cdots\cdot s_r}$
			\StatexIndent[3] $\cdot \sum_{\vec{C}\in \RR_r} \textsc{StrIsAssigned}(\vec{C},r,\lambda,\epsilon,m,\vec{\tau})$ 
		\EndPasses
		\State \Return $\hat{n}_r$
		\EndProcedure
	\end{algorithmic}
\end{algorithm}

\begin{algorithm}
    \caption{Sample $K_{t+1}$-candidates from a stream}
    \label{alg:stream-sampleset}
    \begin{algorithmic}[1]
    \Procedure{\textsc{StreamSet}}{$t,\RR_t, \dds{\RR_t}, s_{t+1}$}
		\State set up a data structure $\mathcal{D}$ to sample each $\vec{T}\in \RR_t$
		\StatexIndent[2] with probability $\dg{\vec{T}}/\dg{\RR_t}$
		\State initialize $\RR_{t+1}=\emptyset, \dds{\RR_{t+1}}$
		\ForPar{$\ell \in [s_{t+1}]$}
			\State invoke $\mathcal{D}$ to generate $\vec{T}_\ell$
			\State $u \gets$ minimum degree vertex of $\vec{T}_\ell$
			\Pass{1}{$u, \dds{\RR_t}[u]$}
				\State query a random neighbor $w$ of $u$ \Comment{$f_3$}
			\EndPass
			\Pass{2}{$\vec{T}_\ell, w$}
				\State \textbf{if} $(\vec{T}_\ell,w)$ is a $(t+1)$-clique, \textbf{then} add it to $\RR_{t+1}$ \Comment{$f_4$}
				\State update $\dds{\RR_{t+1}}$ \Comment{$f_2$}
			\EndPass
		\EndForPar
		\State \Return $\RR_{t+1}, \dds{\RR_{t+1}}$
	\EndProcedure
	\end{algorithmic}
\end{algorithm}

\let\saveFloatBarrier\FloatBarrier%
\let\FloatBarrier\relax%
\section{Conclusion}
\let\FloatBarrier\saveFloatBarrier%
We studied the problem of estimating the number of occurrences of a subgraph $H$ in a graph $G$ in the streaming setting. We provide a transformation that converts sublinear-time graph algorithms in the query access model to sublinear-space streaming algorithms. For an arbitrary subgraph $H$, we obtained a $3$-pass algorithm in the turnstile streaming model with space complexity $\tilde{O}(\frac{m^{\rho(H)}}{\epsilon^2\cdot \# H})$ for $(1\pm \epsilon)$-approximating $\# H$, where $\rho(H)$ is the fractional edge-cover of $H$. For a clique $\# K_r$ such that $r\geq 3$, we obtained a constant-pass streaming algorithm for $(1\pm \epsilon)$-approximating $\# K_r$ in $\tilde{O}(\frac{m\lambda^{r-2}}{\epsilon^2\cdot \# K_r})$ space, in a graph $G$ with degeneracy $\lambda$. 

It would be interesting to reduce the number of passes of our algorithms even further: Can we obtain a $2$-pass algorithm for $\# H$ with space complexity $\tilde{O}(\frac{m^{\rho(H)}}{\epsilon^2\cdot \# H})$? Can we achieve a $C$-pass streaming algorithm for $\# K_r$ with space complexity $\tilde{O}(\frac{m\lambda^{r-2}}{\epsilon^2\cdot \# K_r})$ space in a graph $G$ with degeneracy $\lambda$, for some universal constant $C$ that does not depend on $r$? 

\section*{Acknowledgments}
P.P. is supported by ``the Fundamental Research Funds for the Central Universities''.

\tagged{arxiv,personal}{
	\printbibliography
}
\tagged{pods}{
\bibliographystyle{ACM-Reference-Format}
\bibliography{mybibliography}
}

\appendix
\droptag{mainpart}
\usetag{appendix}
\newpage
\section{Turnstile subgraph counting algorithm}
\label{sec:app-turnstile}

\turnstileSampler*
\turnstileSamplerProof

\droptag{io_version}
\usetag{ts_version}
\subgraphAlgorithms

\section{A sublinear-time algorithm for approximating $\# H$ of a general graph}\label{sec:sublineartimesubgraph}
Now we present the pseudo code of the subliner-time algorithm for approximating sampling and counting an arbitrary subgraph in the query access model given in \cite{fichtenberger2020sampling}. The FGP algorithm refers to \cref{subgraph-sample} (\textsc{SampleSubgraph}), which invokes two subroutines \cref{odd-cycle-sample} (\textsc{SampleOddCycle}) and  \cref{star-sample} (\textsc{SampleStar}) for sampling an odd length cycle and a star, respectively. 

\begin{algorithm}[H]
    \caption{Sampling a wedge}
    \label{wedge-sample}
    \begin{algorithmic}[1]
    \Procedure{\textsc{SampleWedge}}{$G,u,v$}
        \If{$d_u \leq \sqrt{2 m}$} \label{wedge-degree}
            \State sample a number $i \in \{ 1, \ldots \sqrt{2 m} \}$ uniformly at random
            \If{$i > d_u$}
                \State \Return \fail
            \EndIf
            \State $w$ $\leftarrow$ $i^{th}$ neighbor of $u$
        \Else
            \State sample a vertex $w$ with prob. proportional to its degree
			\State sample $t \in [0,1]$ uniformly at random
			\If{$t > \sqrt{2m} / \dg{w}$}
				\State \Return \fail
			\EndIf
        \EndIf
		\State \Return $w$
		\EndProcedure
		\end{algorithmic}
\end{algorithm}

\begin{algorithm}[H]
    \caption{Sampling a cycle of length $2k+1$}
    \label{odd-cycle-sample}
    \begin{algorithmic}[1]
    \Procedure{\textsc{SampleOddCycle}}{$G,2k+1$}  
        \State Obtain $k$ directed edges $\diedge{u_1,v_1}, \ldots, \diedge{u_k,v_k}$\StatexIndent[2] by calling \textsc{SampleEdge} $k$ times \label{odd-cycle-loop}
        \If{$u_1, v_1, \ldots, u_k,v_{k}$ is a path of length $2k-1$\StatexIndent[3] and $u_1 \prec v_1$, $\forall i > 1 : u_1 \prec u_i, v_i$} \label{path-check}
            \If{\textsc{SampleWedge}($G,u_1,v_{k}$) returns $w$ and $w \prec v_1$}
                        \State \Return $\{(u_1,v_1),\ldots,(u_k,v_k)\}\cup \{ (v_{k},w), (w,u_1) \}$
            \EndIf
        \EndIf
        \State \Return\fail
    \EndProcedure
    \end{algorithmic}
\end{algorithm}

\begin{algorithm}[H]
    \caption{Sampling a star with $k$ petals}
    \label{star-sample}
    \begin{algorithmic}[1]
        \Procedure{SampleStar}{$G,k$}
        \State Obtain $k$ directed edges $\diedge{u_1,v_1}, \ldots, \diedge{u_k,v_k}$\StatexIndent[2] by calling \textsc{SampleEdge} $k$ times
        \If{$u_1 = u_2 = \ldots = u_k$ and $v_1 \prec v_2 \prec \ldots \prec v_k$
        }
            \State \Return $(u_1, v_1, \ldots, v_k)$
        \EndIf
        \State \Return \fail
\EndProcedure
    \end{algorithmic}
\end{algorithm}

\begin{algorithm}[H]
    \caption{Sampling a copy of subgraph $H$}
    \label{subgraph-sample}
    \begin{algorithmic}[1]  
    \Procedure{SampleSubgraph}{$G, H$}
\State{Let $\overline{T}=\{\overline{C_1},\ldots,\overline{C_o},\overline{S_1},\ldots,\overline{S_s}\}$ denote\StatexIndent[2] a (decomposition) type  of $H$. }
            \ForAll{$i=1\ldots o$}
                \If{ \textsc{SampleOddCycle($G, \lvert E(\overline{C}_i) \rvert$)} returns a cycle $\mathcal{C}$}
                \State $\mathcal{C}_i \gets \mathcal{C}$\label{alg:cycle_H}
            \Else
            \State \Return \fail
            \EndIf
            \EndFor
            \ForAll{$j =1\ldots s$}
                \If{\textsc{SampleStar($G,\lvert V(\overline{S}_j) \rvert - 1$)} returns a star $\mathcal{S}$}
                    \State $\mathcal{S}_j \gets \mathcal{S}$\label{alg:star_H} 
            \Else
                \State \Return \fail
            \EndIf 
            \EndFor
                
            \State Query all edges $(\bigcup_{i \in [o]} V(\mathcal{C}_i) \cup \bigcup_{j \in [s]} V(\mathcal{S}_j))^2$
            \If{$S := (\mathcal{C}_1, \ldots, \mathcal{C}_o, \mathcal{S}_1, \ldots, \mathcal{S}_s)$ forms a copy of $H$}
                \State flip a coin and with probability $\frac{1}{f_{\overline{T}}(H)}$: \Return $S$ \label{alg:occurrence_H}
            \EndIf
            \State \Return \fail
        \EndProcedure
    \end{algorithmic}
\end{algorithm}

The authors of \cite{fichtenberger2020sampling} then make use of the FGP algorithm as a subroutine to obtain a uniform sampler of a copy of $H$ (i.e., \cref{subgraph-sample-uniformly} \textsc{SampleSubgraphUniformly}) and an estimator of $\# H$ (i.e., \cref{subgraph-estimation} \textsc{CountSubgraph}). 
\begin{algorithm}
	\caption{Sampling a copy of subgraph $H$ uniformly at random}
	\label{subgraph-sample-uniformly}
	\begin{algorithmic}[1]  
		\Procedure{SampleSubgraphUniformly}{$G, H$}
		\ForAll{$j=1,\ldots, q=10 \cdot {(2m)}^{\rho(H)}/T$}
		\State Invoke \textsc{SampleSubgraph}($G,H$)
		\If{a subgraph $H$ is returned}
		\State \Return $H$
		\EndIf
		\EndFor
		\State \Return \fail
		\EndProcedure
	\end{algorithmic}
\end{algorithm}

\begin{algorithm}
	\caption{Approximately counting the number of instances of $H$}
	\label{subgraph-estimation}
	\begin{algorithmic}[1]  
		\Procedure{CountSubgraph}{$G, H$}
		\State $X=0$
		\ForAll{$j=1,\ldots, q=10 \cdot {(2m)}^{\rho(H)}/(T\epsilon^2)$}
		\State Invoke \textsc{SampleSubgraph}($G,H$)
		\If{a subgraph $H$ is returned}
		\State $X\gets X+1$
		\EndIf
		\EndFor
		\State \Return $ X$
		\EndProcedure
	\end{algorithmic}
\end{algorithm}

\section{A sublinear-time algorithm for approximating $\# K_r$ of a graph with degeneracy at most $\lambda$}\label{app:algo_degeneracy}
This section lists pseudo code for the algorithm from \cite{BerHow20a}. The ERS-algorithm refers to \cref{clique-estimation} \textsc{CountClique}, which invokes  $\Theta(\log n)$ times \cref{basic-routine-clique-estimation} \textsc{ApproxClique} and takes the median of these outputs. 

In \cref{basic-routine-clique-estimation} \textsc{ApproxClique}, it invokes a subroutine \cref{alg:sample_a_larger_clique} for sampling a set of larger cliques, and a subroutine \cref{alg:isassigned} \textsc{IsAssigned} for checking if an ordered $r$-clique $\vec{C}$ is assigned or not. Finally, \cref{alg:isassigned} \textsc{IsAssigned} invokes \cref{alg:isactive} \textsc{IsActive} to check if all the prefixes of $\vec{C}$ is active or not. 

\begin{lemma}[\cite{eden2020faster}]\label{thm:fromerstheorem}
Let $G$ be a graph with degeneracy $\lambda$. The ERS algorithm (i.e., \cref{clique-estimation}) in the general graph model satisfies the following: 
\begin{itemize}
    \item if $L_r\in [\frac{\#K_r}{4}, \#K_r]$, then
\textsc{CountClique}($n, r,\lambda,\epsilon,m,L_r$) outputs a value $\hat{n}_r$ such that with probability at least $1-n^{-\Omega(1)}$, $\hat{n}_r$ is a $(1\pm \varepsilon)$-approximation of $\#K_r$;  
\item if $L_r > \#K_r$, then \textsc{CountClique}($n, r,\lambda,\epsilon,m,L_r$) outputs a value $\hat{n}_r$ such that with probability at least $1-n^{-\Omega(1)}$, $\hat{n}_r < L_r$;  
\item the expected running time and query complexity of the algorithm are
\[
O\left(\min\{\frac{n\lambda^{r-1}}{L_r}, \frac{n}{(\# K_r)^{1/r}}\} + \frac{m\lambda^{r-1}}{L_r}\cdot \frac{\#K_r}{L_r} \right)\cdot \mathrm{poly}(\log n, 1/\varepsilon, r^r)
\]
\end{itemize}
\end{lemma}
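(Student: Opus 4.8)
The plan is to prove \cref{thm:fromerstheorem} by invoking the correctness and complexity analysis of the ERS algorithm established by Eden, Ron, and Seshadhri~\cite{eden2020faster}, from which this statement is cited, and to verify that the three claimed guarantees---the two-sided approximation/one-sided underestimate dichotomy and the expected running-time/query-complexity bound---follow from the structure of the algorithm as described in the high-level description above. First I would unpack the assignment scheme: the algorithm assigns each unordered $r$-clique $C$ to exactly one ordered prefix chain by designating the lexicographically smallest \emph{active} ordered $r$-clique induced by $C$ (via \textsc{IsActive} on each prefix $\vec{C}_{\leq t}$). The key accounting fact is that, with the parameters $\vec{\tau}$ and sample sizes $s_1,\dots,s_r$ chosen as in~\cite{eden2020faster}, the expected weight $\ww(\RR_t)$ tracks $\frac{\# K_r}{n}\cdot\frac{s_1\cdots s_t}{\dg{\RR_1}\cdots\dg{\RR_{t-1}}}$ up to a $(1\pm\gamma)$-factor per level, so that the final telescoped estimator $\hat n_r$ is an unbiased-up-to-multiplicative estimate of $\# K_r$.

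Next I would establish the concentration that yields the dichotomy. For each level $t$, a Chernoff/Bernstein bound controls the deviation of $\ww(\RR_t)$ from its expectation, using that $s_{t+1}$ is chosen proportional to $\dg{\RR_t}\tau_{t+1}/\tilde\ww_t$ so that enough samples are drawn to drive the per-level relative error below $\gamma$; taking a union bound over the constantly-many levels $t\in\{2,\dots,r\}$ and over the $\Theta(\log n)$ independent repetitions (whose median is returned by \textsc{CountClique}) amplifies the success probability to $1-n^{-\Omega(1)}$. For the first bullet, when $L_r\in[\#K_r/4,\#K_r]$ the chosen sample sizes are large enough that the estimator is a $(1\pm\varepsilon)$-approximation; the factor-$4$ slack in $L_r$ only inflates $s_1$ (and hence the complexity) by a constant, leaving correctness intact. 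For the second bullet, when $L_r>\#K_r$ I would argue monotonicity: the parameters are set assuming $\#K_r\geq L_r$, so the estimator systematically produces a value concentrated below $L_r$ with high probability---this is exactly the overshoot-detection property that makes geometric search over candidate values of $L$ correct, and it follows because overestimating the clique count in the parameters shrinks the realized weights relative to the threshold built into the assignment check.

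For the complexity bound I would track the total number of queries, which is dominated by the sizes of the sample sets across all iterations and all recursive \textsc{IsActive}/\textsc{IsAssigned} calls. The sampling of $\RR_1$ (a set of $s_1$ vertices) contributes the $\min\{\frac{n\lambda^{r-1}}{L_r},\frac{n}{(\# K_r)^{1/r}}\}$ term, reflecting the two regimes in which $s_1$ is chosen to guarantee that enough cliques are hit. The iterative clique-growing and assignment-checking contribute the $\frac{m\lambda^{r-1}}{L_r}\cdot\frac{\#K_r}{L_r}$ term, where the degeneracy bound $\lambda$ is used crucially: it caps the number of neighbors $w$ forming a larger clique (so each $t$-clique has at most $O(\lambda)$ relevant extensions), yielding the $\lambda^{r-1}$ factor, while the ratio $\#K_r/L_r$ captures the overshoot correction in the second regime. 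The $\mathrm{poly}(\log n,1/\varepsilon,r^r)$ factor absorbs the $\Theta(\log n)$ repetitions, the $\gamma^{-2}$ from the Chernoff sample sizes, and the $r!$-type combinatorial overhead from ranging over all $r!$ orderings of each clique in the assignment rule.

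The main obstacle I expect is the second bullet (the one-sided guarantee when $L_r>\#K_r$): establishing that the median estimator concentrates strictly below $L_r$ requires showing that the parameter choices---calibrated to an \emph{assumed} lower bound $L_r$ on $\#K_r$---produce weights and threshold comparisons that systematically reject when the true count is smaller, and making this rigorous means carefully tracing how $L_r$ enters each $s_t$ and each activeness threshold $\tau_t$ to verify the resulting bias is in the correct direction. Since this statement is quoted directly from~\cite{eden2020faster}, the honest and efficient route is to reduce each of the three bullets to the corresponding lemma in that paper and verify that the parameterization stated in \cref{clique-estimation} and \cref{basic-routine-clique-estimation} matches their hypotheses; the remaining work is bookkeeping rather than a new argument.
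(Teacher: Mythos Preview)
The paper does not prove this lemma at all: it is stated as a direct citation of \cite{eden2020faster} (note the attribution in the lemma header) and no proof or proof sketch is given anywhere in the paper. Your concluding observation---that the honest route is simply to defer to the corresponding results in \cite{eden2020faster} and check that the parameterization in \cref{clique-estimation,basic-routine-clique-estimation} matches their hypotheses---is therefore exactly the paper's ``proof,'' and the extended sketch you provide beforehand, while a reasonable high-level summary of the ERS analysis, goes well beyond what the paper itself supplies.
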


\begin{algorithm}[t]
	\caption{Approximately counting the number of instances of $K_r$}
	\label{clique-estimation}
	\begin{algorithmic}[1]  
		\Procedure{CountClique}{$n, r,\lambda,\epsilon,m,L_r$}
		\State $\gamma \gets \epsilon/(8r\cdot r!)$, $\beta\gets 1/(6r)$
		\State for each $t\in [2,r-1]$, set $\tau_t\gets\frac{r^{4r}}{\beta^r\cdot \gamma^2}\cdot \lambda^{r-t}$; \StatexIndent[2] $\tau_r\gets 1$; $\tau_1\gets \frac{r^{4r}}{\gamma^2}\cdot \min\{\lambda^{r-1},L_r^{(r-1)/r}\}$, \StatexIndent[2] $\vec{\tau}\gets\{\tau_1,\dots,\tau_r\}$
		\ForAll{$j=1,\ldots, q=\Theta(\log (n))$}
		\State Invoke \textsc{ApproxCliques}($n,r,\lambda,\epsilon,  L_r,m,\vec{\tau}$).%
		\State Let $\chi_j$ be the returned value. 
		\EndFor
		\State Let $\hat{n}_r$ be the the median value of $\chi_1,\cdots, \chi_q$ \State \Return $ \hat{n}_r$.
		\EndProcedure
	\end{algorithmic}
\end{algorithm}

\begin{algorithm}[t]
	\caption{Approximately counting the number of instances of $K_r$}
	\label{basic-routine-clique-estimation}
	\begin{algorithmic}[1]  
		\Procedure{ApproxClique}{$n,r,\lambda,\epsilon, L_r,m,\vec{\tau}$}
		\State set $\RR_0\gets V$, $\dg{\RR_0}\gets n$, $\tilde{\ww}_0=(1-\epsilon/2)L_r$, $\beta\gets 1/(18r)$ \StatexIndent[2] and $\gamma\gets \epsilon/(2r)$
		\State sample $s_1=\lceil \frac{n\tau_1}{\tilde{\ww}_0}\cdot \frac{3\ln(2/\beta)}{\gamma^2}\rceil$ vertices u.a.r \StatexIndent[2] and let $\RR_1$ be the chosen multiset%
		\ForAll{$t=1,\ldots, r-1$}
		\State Compute $\dg{\RR_t}$ and set $\tilde{\ww}_t=(1-\gamma)\frac{\tilde{\ww}_{t-1}}{\dg{\RR_{t-1}}}\cdot s_t$ \StatexIndent[3] and $s_{t+1}\gets \lceil \frac{\dg{\RR_t}\tau_{t+1}}{\tilde{\ww}_t}\cdot \frac{3\ln(2/\beta)}{\gamma^2}\rceil$ 
		\State If $s_{t+1}>\frac{4m\lambda^{t-1}\cdot \tau_{t+1}}{L_r}\cdot \frac{(r!)^2\cdot 3\ln (2/\beta)}{\beta^t\cdot \gamma^2}$ then \textbf{abort}
		\State Invoke \textsc{SampleASet}($t,\RR_t,s_{t+1}$) \StatexIndent[3] and let $\RR_{t+1}$ be the returned multiset. 
		\EndFor
		\State $\hat{n}_r=\frac{n\cdot \dg{\RR_1}\cdot\cdots \cdot \dg{\RR_{r-1}}}{s_1\cdot \cdots\cdot s_r}\sum_{\vec{C}\in \RR_r} \textsc{IsAssigned}(\vec{C},r,\lambda,\epsilon,L_r,m,\vec{\tau})$ 
		\State \Return $\hat{n}_r$.
		\EndProcedure
	\end{algorithmic}
\end{algorithm}

\begin{algorithm}[t]
	\caption{Sampling a set of ordered $(t+1)$-cliques}
	\label{alg:sample_a_larger_clique}
	\begin{algorithmic}[1]  
		\Procedure{SampleASet}{$t,\RR_t, s_{t+1}$}
		\State Compute $\dg{\RR_t}$ and set up a data structure \StatexIndent[2] to sample each $\vec{T}\in \RR_t$ with probability $\dg{\vec{T}}/\dg{\RR_t}$
		\State initialize $\RR_t=\emptyset$
		\ForAll{$\ell=1, \dots, s_{t+1}$}
		\State invoke the above to 
		generate $\vec{T}_\ell$
		\State find the minimum degree vertex $u$ of $\vec{T}_\ell$
		\State sample a random neighbor $w$ of $u$
		\State If the $(t+1)$-tuple $(\vec{T}_\ell,w)$ is an ordered $(t+1)$-clique, add it to $\RR_{t+1}$
		\EndFor
		\State \Return $\RR_{t+1}$.
		\EndProcedure
	\end{algorithmic}
\end{algorithm}

\begin{algorithm}[t]
	\caption{Check if an ordered $r$-clique $\vec{C}$ is assigned to the un-ordered clique}
	\label{alg:isassigned}
	\begin{algorithmic}[1]  
		\Procedure{IsAssigned}{$\vec{C},r,\lambda,\epsilon,L_r,m,\vec{\tau}$}
		\State Let $C$ be the un-ordered clique corresponding to $\vec{C}$
		\ForAll{ordered $r$-clique $\vec{C}'$ \StatexIndent[2] whose un-ordered clique equals $C$}
		\ForAll{prefix $\vec{C}_{\leq t}'$, $t\in[r-1]$}
		\State invoke \textsc{IsActive}($t,\vec{C}_{\leq t}',r,\lambda,\epsilon,L_r,m,\vec{\tau}$) \StatexIndent[4] and if it returns \textbf{Non-Active} \StatexIndent[4] then  \textbf{abort} and \Return $0$
		\EndFor
		\EndFor
		\If{$\vec{C}$ is the lexicographically first ordered $r$-clique \StatexIndent[2] in the above set of active ordered cliques}
		\State \Return $1$
		\Else
		\State  \Return $0$.
		\EndIf
		\EndProcedure
	\end{algorithmic}
\end{algorithm}

\begin{algorithm}[t]
	\caption{Check if an ordered $i$-clique $\vec{I}$ is active}
	\label{alg:isactive}
	\begin{algorithmic}[1]  
		\Procedure{IsActive}{$i,\vec{I},r,\lambda,\epsilon,L_r,m,\vec{\tau}$}
		\ForAll{$\ell=1,\cdots, q= 12\ln(n^{r+10})$}
		\State set $\RR_i=\{\vec{I}\}$, $\tilde{\ww}_i=(1-\epsilon/2)\tau_i,\beta=1/(6r)$, \StatexIndent[3] and $\gamma=\epsilon/(8r\cdot r!)$.
		\ForAll{$t=i, \dots, r-1$}
		\State Compute $\dg{\RR_t}$
		\State for $t>i$, set $\tilde{\ww}_t=(1-\gamma)\frac{\tilde{\ww}_{t-1}\cdot s_t}{\dg{\RR_{t-1}}}$ \StatexIndent[4] and $s_{t+1}=\frac{\dg{\RR_t}\cdot \tau_{t+1}}{\tilde{\ww}_t}\cdot \frac{3\ln(2/\beta)}{\gamma^2}$.
		\State if $s_{t+1}>\frac{2m\lambda^{t-1}\cdot \tau_{t+1}}{L_r}\cdot \frac{12 \ln(1/\beta)}{\beta^r\cdot\gamma^3}$, \StatexIndent[4] then set $\chi_\ell=0$ and continue to next $\ell$.
		\State invoke \textsc{SampleASet}($t,\RR_t,s_{t+1}$) \StatexIndent[4] and let $\RR_{t+1}$ be the returned multiset.
		\EndFor
		\State set $\hat{c}_r(\vec{I})=\frac{\dg{\RR_i}\cdot \cdots \cdot \dg{\RR_{r-1}}}{s_{i+1}\cdot \cdots\cdot s_r}\cdot |\RR_{r}|$.
		\State if $\hat{c}_r(\vec{I})\leq \frac{\tau_i}{4}$, then $\chi_\ell=1$, otherwise $\chi_\ell=0$.
		\EndFor
		\If{$\sum_{\ell=1}^q\chi_\ell\geq q/2$}
		\State \Return \textbf{Active}
		\Else
		\State  \Return \textbf{Non-Active}.
		\EndIf
		\EndProcedure
	\end{algorithmic}
\end{algorithm}

\section{Missing algorithms from section \ref{sec:streaming-degeneracy}}
\label{sec:app-missing}

\begin{algorithm}[H]
	\caption{Check if an ordered $r$-clique $\vec{C}$ is assigned to its unordered clique}
	\label{alg:stream-assigned}
	\begin{algorithmic}[1]  
		\Procedure{StrIsAssigned}{$\vec{C},r,\lambda,\epsilon,m,\vec{\tau}$}
			\State $C \gets$ the unordered clique corresponding to $\vec{C}$
			\State initialize dictionary $A$
			\ForPar{ordered $r$-cliques $\vec{C}'$ isomorphic to $C$}
				\ForPar{$t\in\{2,\dots,r\}$}
					\Passes{1}{2t-1}{$t,\vec{C}_{\leq t}',r,\lambda,\epsilon,m,\vec{\tau}$}
						\State $A[\vec{C}_{\leq t}'] \gets$ \textsc{StrAct}($t,\vec{C}_{\leq t}',r,\lambda,\epsilon,m,\vec{\tau}$)
					\EndPasses
				\EndForPar
			\EndForPar

			\For{ordered $r$-clique $\vec{C}'$ of $C$}
				\If{$\vec{C}'$ ``lex. <'' $\vec{C}$
					\StatexIndent[3] $\wedge \forall t \in\{2,\dots,r\}: A[\vec{C}_{\leq t}'] = \textbf{active}$}
					\State \Return 0
				\EndIf
			\EndFor
			\State \Return 1
		\EndProcedure
	\end{algorithmic}
\end{algorithm}

\begin{algorithm}[H]
	\caption{Check if an ordered $i$-clique $\vec{I}$ is active}
	\label{alg:stream-active}
	\begin{algorithmic}[1]  
		\Procedure{StrAct}{$i,\vec{I},r,\lambda,\epsilon,m,\vec{\tau}$}
		\ForPar{$\ell \gets 1, \dots, q= 12\ln(n^{r+10}/\delta)$}
			\State $\RR_i \gets \{\vec{I}\}$, $\tilde{\ww}_i \gets (1-\epsilon/2)\tau_i, \beta \gets 1/(6r)$,
			\StatexIndent[3] and $\gamma \gets \epsilon/(8r\cdot r!)$
			\Pass{$1$}{$\RR_i$}
				\State construct $\dds{\RR_i}$ \Comment{$f_2$}
			\EndPass
			\ForAll{$t \gets i, \dots, r-1$}
				\State $\dg{\RR_t} \gets \sum_{\vec{T} \in \RR_t} \dg{\vec{T}} = \sum_{\vec{T} \in \RR_t} \min_{v \in \vec{T}} \dds{\RR_t}[v]$
				\State \textbf{for} $t>i$, set $\tilde{\ww}_t=(1-\gamma)\frac{\tilde{\ww}_{t-1}\cdot s_t}{\dg{\RR_{t-1}}}$
				\StatexIndent[4] and $s_{t+1}=\frac{\dg{\RR_t}\cdot \tau_{t+1}}{\tilde{\ww}_t}\cdot \frac{3\ln(2/\beta)}{\gamma^2}$
				\If{$s_{t+1}>\frac{2m\lambda^{t-1}\cdot \tau_{t+1}}{\#K_r}\cdot \frac{12 \ln(1/\beta)}{\beta^r\cdot\gamma^3}$}
					\State $\chi_\ell \gets 0$ and break loop iteration for $\ell$
				\EndIf
				\Passes{2t}{2t+1}{$t,\RR_t,\dds{\RR_t},s_{t+1}$}
				\State $\RR_{t+1}, \dds{\RR_{t+1}}$
				\StatexIndent[5] $\gets$ \textsc{StreamSet}($t,\RR_t,\dds{\RR_t},s_{t+1}$)
				\EndPasses
			\EndFor
			\State $\hat{c}_r(\vec{I}) \gets \frac{\dg{\RR_i}\cdot \cdots \cdot \dg{\RR_{r-1}}}{s_{i+1}\cdot \cdots\cdot s_r}\cdot |\RR_{r}|$
			\State \textbf{if} $\hat{c}_r(\vec{I})\leq \frac{\tau_i}{4}$, \textbf{then} $\chi_\ell \gets 1$, \textbf{else} $\chi_\ell \gets 0$
		\EndForPar
		\If{$\sum_{\ell=1}^q\chi_\ell\geq q/2$}
		\State \Return \textbf{active}
		\Else
		\State \Return \textbf{non-active}
		\EndIf
		\EndProcedure
	\end{algorithmic}
\end{algorithm}

\end{document}